\algnewcommand{\InlineIf}[2]{ % single line if-then
  \State \algorithmicif\ #1\ \algorithmicthen\ #2}
\algnewcommand{\InlineIfElse}[3]{ % single line if-then-else
  \State \algorithmicif\ #1\ \algorithmicthen\ #2\ \algorithmicelse\ #3}
\algnewcommand{\InlineFor}[2]{\algorithmicfor\ #1\ \algorithmicdo\ #2} % single line for loop
\newcommand{\bigO}[1]{O(#1)} % big O for complexity
\newcommand{\softO}[1]{\mathchoice{\tilde{O}\left(#1\right)}{O\tilde{~}(#1)}{O\tilde{~}(#1)}{O\tilde{~}(#1)}} % soft O for complexity
\newcommand{\expmm}{\omega} % exponent for the cost of matrix multiplication
\newcommand{\NN}{\mathbb{N}} %  integers
\newcommand{\ZZp}{\mathbb{Z}_{> 0}} % positive integers
\newcommand{\field}{\mathbb{K}} % base field
\newcommand{\yAlg}{\alg[y]} % univariate polynomial ring over algebra
\newcommand{\abRing}{\field[\alpha,\beta]} % bivariate polynomial ring in \alpha,\beta
\newcommand{\xRing}{\field[x]} % bivariate polynomial ring
\newcommand{\prc}{d} % precision (working over K[x]/(x^\prc))
\newcommand{\alg}{\mathbb{A}} % algebra K[x]/x^\prc
\newcommand{\ann}[1]{\operatorname{Ann}(#1)} % annihilator
\newcommand{\genby}[1]{\langle #1 \rangle} % notation <...> for ideal/module generated by ...
\newcommand{\GENBY}[1]{\left\langle #1 \right\rangle} % notation <...> for ideal/module generated by ...
\newcommand{\mats}[2]{\field^{#1 \times #2}} % matrices over field
\newcommand{\xmats}[2]{\xRing^{#1 \times #2}} % matrices over polynomials in x
\newcommand{\abmats}[2]{\abRing^{#1 \times #2}} % matrices over polynomials in \alpha,\beta
\newcommand{\sdim}{n} % number of sequence ``components''
\newcommand{\order}{\delta} % order of recurrent sequence
\newcommand{\seqeSet}{\alg^{\sdim}} % set for a sequence entry (vector)
\newcommand{\seqSetExpand}{(\seqeSet)^\NN} % set of sequences, explicit
\newcommand{\seqSet}{\mathcal{S}} % set of sequences, short notation
\newcommand{\seq}{\boldsymbol{s}} % (multi-component) sequence over \alg
\newcommand{\seqe}[1]{S_{#1}} % #1-th sequence entry
\newcommand{\biseqeSet}{\field^{\sdim}} % set for a bivariate/bidimensional sequence entry (vector)
\newcommand{\biseqSetExpand}{(\biseqeSet)^{\NN^2}} % set of bivariate/bidimensional sequences, explicit
\newcommand{\biseqSet}{\mathfrak{S}} % set of bivariate/bidimensional sequences, short notation
\newcommand{\biseq}{\boldsymbol{\sigma}} % (multi-component) bivariate/bidimensional sequence over \field^n
\newcommand{\biseqe}[1]{\zeta_{#1}} % #1-th bivariate/bidimensional sequence entry
\newcommand{\gens}{G_{\seq}} % generating series
\newcommand{\tseqSetExpand}[1]{(\seqeSet)^{#1}} % set of truncated sequences, explicit
\newcommand{\tseqSet}[1]{\mathcal{S}_{#1}} % set of truncated sequences (length #1), short notation
\newcommand{\tseq}[1]{\boldsymbol{s}_{#1}} % truncated sequence (finite sequence of first #1 terms)
\newcommand{\tdim}{e} % truncated sequence dim: number of terms in truncated sequence
\newcommand{\hk}[1]{H_{#1}} % hankel matrix for #1 = (seq,tdim)
\newcommand{\kerhk}[1]{\mathcal{K}_{#1}} % kernel of hankel matrix for sequence #1 = (seq,tdim)
\newcommand{\ord}{\preccurlyeq} % term order
\newcommand{\ordLex}{\ord_{\mathrm{lex}} } % lexicographic order on the polynomial ring
\newcommand{\modApp}[2]{\mathcal{A}_{#1}{(#2)}} % module of approximants for matrix #2 at order #1
\begin{document}
\fancyhead{}

\title[Algorithms for Linearly Recurrent Sequences of Truncated Polynomials]{Algorithms for Linearly Recurrent Sequences \texorpdfstring{\\}{} of Truncated Polynomials}

\author{Seung Gyu Hyun}
\affiliation{%
  \institution{{\normalsize University of Waterloo}}
  \city{Waterloo, ON} 
  \state{Canada} 
}
%\email{kevin.hyun@edu.uwaterloo.ca}

\author{Vincent Neiger}
\affiliation{%
  \institution{{\normalsize Univ. Limoges, CNRS, XLIM, UMR\,7252}}
  \city{F-87000 Limoges} 
  \state{France} 
}
%\email{vincent.neiger@unilim.fr}

\author{\'Eric Schost}
\affiliation{%
  \institution{{\normalsize University of Waterloo}}
  \city{Waterloo, ON} 
  \state{Canada} 
}
%\email{eschost@uwaterloo.ca}

\begin{CCSXML}
<ccs2012>
<concept>
<concept_id>10010147.10010148.10010149.10010150</concept_id>
<concept_desc>Computing methodologies~Algebraic algorithms</concept_desc>
<concept_significance>500</concept_significance>
</concept>
<concept>
<concept_id>10003752.10003809</concept_id>
<concept_desc>Theory of computation~Design and analysis of algorithms</concept_desc>
<concept_significance>500</concept_significance>
</concept>
</ccs2012>
\end{CCSXML}

\ccsdesc[500]{Computing methodologies~Algebraic algorithms}
\ccsdesc[500]{Theory of computation~Design and analysis of algorithms}

\begin{abstract}
  Linear recurrent sequences are those whose elements are defined as linear
  combinations of preceding elements, and finding recurrence relations is a
  fundamental problem in computer algebra. In this paper, we focus on sequences
  whose elements are vectors over the ring $\alg=\xRing/\genby{x^\prc}$ of
  truncated polynomials. Finding the ideal of their recurrence relations has
  applications such as the computation of minimal polynomials and
  determinants of sparse matrices over $\alg$. We present three methods for
  finding this ideal: a Berlekamp-Massey-like approach due to Kurakin, one
  which computes the kernel of some block-Hankel matrix over \(\alg\) via a
  minimal approximant basis, and one based on bivariate Pad\'e approximation.
  We propose complexity improvements for the first two methods, respectively by
  avoiding the computation of redundant relations and by exploiting the Hankel
  structure to compress the approximation problem. Then we confirm these
  improvements empirically through a C++ implementation, and we discuss the
  above-mentioned applications. 
  \vspace{0.1cm}
\end{abstract}
%%%% MACRO-FREE NEWLINE-FREE VERSION OF THE ABSTRACT
% Linear recurrent sequences are those whose elements are defined as linear combinations of preceding elements, and finding recurrence relations is a fundamental problem in computer algebra. In this paper, we focus on sequences whose elements are vectors over the ring $\mathbb{A} = \mathbb{K}[x]/(x^d)$ of truncated polynomials. Finding the ideal of their recurrence relations has applications such as the computation of minimal polynomials and determinants of sparse matrices over $\mathbb{A}$. We present three methods for finding this ideal: a Berlekamp-Massey-like approach due to Kurakin, one which computes the kernel of some block-Hankel matrix over $\mathbb{A}$ via a minimal approximant basis, and one based on bivariate Pad\'e approximation.  We propose complexity improvements for the first two methods, respectively by avoiding the computation of redundant relations and by exploiting the Hankel structure to compress the approximation problem. Then we confirm these improvements empirically through a C++ implementation, and we discuss the above-mentioned applications. 

\keywords{Linear recurrences; Berlekamp-Massey-Sakata; Approximant basis; Kurakin's algorithm; Sparse matrix.}

\maketitle

\vspace{0.7cm} 
\section{Introduction}
\label{sec:intro}

Linear recurrences appear in many domains of computer science and mathematics,
and computing recurrence relations efficiently is a fundamental problem in
computer algebra. More specifically, given a sequence of elements in
\(\field^r\) for some field \(\field\) and integer \(r>0\), we seek a
representation of its \emph{annihilator}, which is a polynomial ideal
corresponding to all recurrence relations which are satisfied by the sequence;
the polynomials in the annihilator are said to \emph{cancel} the sequence. In
dimension \(r=1\), the Berlekamp-Massey algorithm \cite{berlekamp68,massey69}
computes the unique monic univariate polynomial of minimal degree that cancels
the sequence. Sakata extended this algorithm first to dimension \(2\)
\cite{Sakata1988} and then to the general case \(r>1\)
\cite{sakata1990extension}; see also Norton and Fitzpatrick's extension to
\(r>1\) \cite{FitzpatrickNorton90}. Recent work includes variants of Sakata's
algorithm such as one which handles relations that are satisfied by several
sequences simultaneously \cite{Sakata2009}, approaches relating the problem to
the kernel of a multi-Hankel matrix and exploiting either fast linear algebra
\cite{berthomieu17} or a process similar to Gram-Schmidt orthogonalization
\cite{Mourrain2017}, and an algorithm relying directly on multivariate
polynomial arithmetic \cite{BerthomieuFaugere2018}. As for the representation
of the output, all these algorithms compute a Gr\"obner basis or a border basis
of the annihilator.

In this paper, we focus on computing recurrence relations for sequences whose
elements are in $\alg^n$, where $\alg = \field[x]/\genby{x^\prc}$.
This problem can be solved using a specialization of Kurakin's algorithm
\cite{kurakin98,Kurakin2000FPSAC}, as detailed in \cref{sec:kurakin}, where we
explicitly describe the output generating set of the annihilator as a
lexicographic Gr\"obner basis of some bivariate ideal. We derive a cost bound
of $\softO{\order \prc (\sdim^2 \order \prc + \sdim^\expmm \prc)}$ operations
in \(\field\), where $\order$ is the order of recurrence (see
\cref{sec:preliminaries:sequences}), and \(\expmm\) is an exponent for matrix
multiplication over \(\field\) \cite{CopWin90,LeGall14,AlmanWilliams2021}.
Because the Gr\"obner bases computed by Kurakin's algorithm are often
non-minimal, in \cref{sec:modified_kurakin} we propose a modified algorithm
which aims at limiting as much as possible the computation of these extraneous
generators.  This lowers the cost to $\softO{\order \prc^* (\sdim^2 \order \prc
+ \sdim^\expmm \prc)}$, where $\prc^*$ is a number arising in the algorithm as
an upper bound on the cardinality $\prc_{\mathrm{opt}}$ of minimal Gr\"obner
bases of the annihilator. In \cref{sec:experiment}, we observe empirically that
\(\prc^*\) is often close or equal to $\prc_{\mathrm{opt}}$.

Despite the improvement, the above cost bound still has a dependence at least
quadratic in the dimension $\sdim$. Our interest in the case \(\sdim\gg 1\) is
motivated among others by the following fact: given a zero-dimensional ideal
$\mathcal{I} \in \field[x,y]$, one can recover a Gr\"obner basis of it via
$\mathcal{I} = \ann{\seq}$ for some well-chosen $\seq \in \alg^\NN$ only if
$\field[x,y]/\mathcal{I}$ has the \emph{Gorenstein} property
\cite{macaulay,grobner35}. When that is not the case, one can recover a basis
of $\mathcal{I}$ via the annihilator of \emph{several} sequences
simultaneously, which means precisely \(\sdim>1\). For large \(\sdim\), we
compute the annihilator via a minimal approximant basis of a block-Hankel
matrix over \(\alg\) constructed from $\seq$. Computing this approximant basis
via the algorithm \Call{PM-Basis}{} of \cite{GiJeVi03} leads to a complexity of
\(\softO{\order^\expmm \sdim \prc}\) operations in \(\field\)
(\cref{sec:pmbasis:basic}). We then propose a novel improvement of this minimal
approximant basis computation, based on a randomized compression of the input
matrix which leverages its block-Hankel structure, reducing the cost to
\(\softO{\order^2 \sdim \prc + \order^\expmm \prc}\) operations in \(\field\)
(\cref{sec:pmbasis:compress}).

The four above algorithms have been implemented in C++ using the libraries NTL
\cite{NTL} and PML \cite{PML}, using Lazard's structural theorem \cite{lazard}
for generating examples of sequences; see \cref{sec:experiment} for more
details. Our experiments on a prime field \(\field\) highlight a good match
between cost bounds and practical running times, confirming also the benefit
obtained from the improvements of both Kurakin's algorithm and the plain
approximant basis approach.

Furthermore, in \cref{sec:bivariate_pade} we propose an algorithm with cost
quasi-linear in the order \(\order\), whereas the above cost bounds are at
least quadratic. For \(\prc\in\bigO{\order}\), we compute the annihilator via
the bivariate Pad\'e approximation algorithm of \cite{NaldiNeiger2020}: this
uses \(\softO{\prc^{\expmm+1} \order}\) operations in \(\field\), at the price
of restricting to \(\sdim\in\bigO{1}\).

Finally, in \cref{sec:applications} we mention applications to the computation
of minimal polynomials and determinants of sparse matrices over \(\alg\).  To
design Wiedemann-like algorithms \cite{wiedemann} for such matrices $A \in
\alg^{\mu \times \mu}$, we need to compute annihilators from sequences of the
form $(u^T A^i v)_{i\ge0} \in \alg^\NN$ for some vectors \(u\) and \(v\);
several such sequences may be needed, leading to the case \(\sdim>1\).

Sakata's \(2\)-dimensional algorithm shares similarities with the case
\(\sdim=1\) of Kurakin's algorithm, and has the same complexity $\bigO{\delta^2
\prc^2}$ \cite[Thm.\,3]{Sakata1988}. Apart from this, to the best of our
knowledge previous work has \(\sdim=1\) and considers \(r\)-dimensional
sequences over \(\field\) for an arbitrary \(r\ge 2\)
\cite{berthomieu17,BerthomieuFaugere2018,Mourrain2017}. Complexity in this
\(r\)-variate context is often expressed using the degree \(D\) of the
considered zero-dimensional ideal; here, \(\order \le D \le \order \prc\) and a
minimal Gr\"obner basis or a border basis will have at most
\(\min(\order,\prc)+1\) elements. The \textsc{Scalar-FGLM} algorithm has cost
\(\softO{\prc_{\mathrm{opt}}\order^{\expmm} \prc}\)
\cite[Prop.\,16]{berthomieu17}. Both the Artinian border basis and
\textsc{Polynomial-Scalar-FGLM} algorithms
\cite{Mourrain2017,BerthomieuFaugere2018} cost \(\bigO{D^2\order\prc}\), which
is \(\bigO{\order^3\prc}\) in the most favourable case \(D=\order\), and
\(\bigO{\order^3\prc^3}\) when \(D \in \Theta(\delta \prc)\) (which will be the
case in our experiments, see \cref{sec:experiment}).
In all cases, a better complexity bound can be achieved by one of our
algorithms outlined above.

While this is not reflected in the cost estimates above, Kurakin's algorithm
and our modified version are still affected by the shape of the staircase of
the computed Gr\"obner basis, due to early termination of the iterations and
late additions; we leave a more refined complexity analysis with respect to $D$
as future work.

\section{Linearly Recurrent Sequences}
\label{sec:preliminaries}

In this section, we review key facts about linearly recurrent sequences and
algorithmic tools used throughout the paper.

\vspace{-0.2cm}
\subsection{Recurrent sequences over \texorpdfstring{$\xRing/\genby{x^\prc}$}{K[x]/(x**d)}}
\label{sec:preliminaries:sequences}

We consider the set \(\seqSet=\seqSetExpand\) of \emph{(vector) sequences} over
the ring $\alg = \xRing/\genby{x^\prc}$ for some \(\prc \in \ZZp\), that is,
sequences $\seq = (\seqe{0},\seqe{1},\ldots)$ with each \(\seqe{k}\) in
\(\seqeSet\). Such a sequence is said to be \emph{linearly recurrent} if there
exist $\gamma \in \NN$ and $p_0,\ldots,p_\gamma \in \alg$ with $p_\gamma$
invertible such that
\begin{equation}
  \label{eq:cancel}
  p_0 \seqe{k} + \cdots + p_{\gamma-1} \seqe{k+\gamma-1} + p_\gamma \seqe{k+\gamma} = 0
  \text{ for all } k \ge 0;
\end{equation}
the \emph{order} of $\seq$ is the smallest such $\gamma$, denoted by \(\order\)
hereafter. A polynomial $p_0 + \cdots + p_\gamma y^\gamma$ in $\yAlg$ is said
to \emph{cancel} $\seq$ if $p_0,\dots,p_\gamma$ satisfies \cref{eq:cancel}
(without requiring that \(p_\gamma\) be invertible). The set of canceling
polynomials forms an ideal $\ann{\seq}$ in $\yAlg$, called the
\emph{annihilator} of \(\seq\). Thus $\seq$ is linearly recurrent of order
$\order$ if and only if there is a monic polynomial of degree $\order$ in
\(\ann{\seq}\): such polynomials are called \emph{generating polynomials} of
$\seq$. Unlike for sequences over fields, here there may be canceling
polynomials of degree less than $\order$, which prevents uniqueness of
generating polynomials; and there are sequences which are not linearly
recurrent but still admit a nonzero canceling polynomial (i.e.~\(\ann{\seq}
\neq \{0\}\)).

\begin{example}
  Consider \(\alg = \xRing / \genby{x^2}\) and the sequence \(\seq =
  (1,1+x,1,1+x,1,1+x,\ldots)\) in \(\alg^\NN\). Note that \(x\seq =
  (x,x,x,x,\ldots)\). This sequence has order \(\order=2\), a generating
  polynomial is \(y^2-1\), and a canceling polynomial of degree less than \(2\)
  is \(x(y-1)\). One can verify that \(\ann{\seq} = \genby{y^2-1,x(y-1)}\); in
  particular \(y^2 + x(y-1) - 1\) is also a generating polynomial. For any
  sequence \(\seq\) in \(\field^\NN\) which is not linearly recurrent, the
  sequence \(x\seq\) in \(\alg^\NN\) is not linearly recurrent but is canceled
  by \(x\), i.e.~\(x \in \ann{x\seq} \setminus \{0\}\).
\end{example}

Like for sequences over fields, here canceling polynomials can be characterized as denominators of the (vector)
generating series of the sequence, defined as $\gens = \sum_{k\ge0} \seqe{k}
y^{-k-1}$ in \((\alg[[y^{-1}]])^\sdim\).  In what follows, the elements of
\(\yAlg^\sdim\) are called polynomials, and for \(g = (g_1,\ldots,g_\sdim) \in
\yAlg^\sdim\) we define \(\deg(g) = \max_{1 \le j \le \sdim} \deg(g_j)\).

\begin{lemma}
  \label{lem:series_characterization}
  Let $\seq \in \seqSet$, let $\gens$ be its generating series, and let $p\in
  \yAlg$. Then, \(p \in \ann{\seq}\) if and only if the series $p\gens \in
  (\alg[[y^{-1}]])^\sdim$ is a polynomial, in which case \(\deg(p\gens) <
  \deg(p)\).
\end{lemma}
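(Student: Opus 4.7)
The plan is to compute the product $p\gens$ explicitly as a Laurent series in $y^{-1}$ and identify its coefficients with the recurrence conditions defining $\ann{\seq}$. Write $p = \sum_{j=0}^{\gamma} p_j y^j$ with $p_j \in \alg$ and $\gamma = \deg(p)$ (allowing $p_\gamma = 0$ only if $p = 0$). Multiplying termwise with $\gens = \sum_{k \ge 0} \seqe{k} y^{-k-1}$ yields
\[
  p \gens \;=\; \sum_{j=0}^{\gamma} \sum_{k \ge 0} p_j \seqe{k} \, y^{j-k-1}.
\]
The product is a well-defined element of $(\alg[[y^{-1}]])^\sdim$ because, for every fixed integer exponent $m$, only finitely many pairs $(j,k)$ satisfy $j - k - 1 = m$.

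Next I would split this double sum according to the sign of the exponent $m = j - k - 1$. On the polynomial side, the contributions come from $j \ge k+1$, which forces $m \le \gamma - 1$; so the polynomial part of $p\gens$ has degree at most $\gamma - 1$. On the strictly-negative side, setting $m = -\ell - 1$ with $\ell \ge 0$, the coefficient is obtained by $k = j + \ell$ and equals
\[
  [y^{-\ell-1}](p\gens) \;=\; \sum_{j=0}^{\gamma} p_j \, \seqe{j+\ell}.
\]

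The decisive observation is now immediate: $p\gens$ lies in $\yAlg^\sdim$ (that is, has no negative powers of $y$) if and only if $\sum_{j=0}^{\gamma} p_j \seqe{j+\ell} = 0$ for every $\ell \ge 0$, which by definition is exactly the statement that $p$ cancels $\seq$ in the sense of \cref{eq:cancel}, that is, $p \in \ann{\seq}$. When this is the case, the earlier bound on the polynomial part gives $\deg(p\gens) \le \gamma - 1 < \deg(p)$, completing the claim.

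I do not anticipate a real obstacle: everything reduces to a careful bookkeeping of Laurent coefficients, together with the fact that the recurrence \cref{eq:cancel} does not require $p_\gamma$ to be invertible, so it matches cleanly the cancellation of all negative powers of $y$ in $p\gens$.
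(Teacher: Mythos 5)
Your proof is correct: the coefficient extraction $[y^{-\ell-1}](p\gens) = \sum_{j=0}^{\gamma} p_j \seqe{j+\ell}$ identifies the vanishing of all negative-power coefficients exactly with the cancellation condition of \cref{eq:cancel}, and the bound $j-k-1 \le \gamma-1$ gives $\deg(p\gens) < \deg(p)$. The paper omits the proof of this lemma, presenting it as the standard characterization of canceling polynomials as denominators of the generating series; your verification is precisely the expected argument and fills that gap correctly.
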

%\begin{proof}
%  One has \(p\gens = \sum_{0 \le j \le \gamma} \sum_{k \ge -j} p_j \seqe{k+j}
%  y^{-k-1}\), where \(\gamma=\deg(p)\) and \(p=p_0 + \cdots + p_\gamma
%  y^\gamma\). Thus all terms of \(p\gens\) have degree less than \(\gamma\),
%  and \(p\gens\) is a polynomial if and only if its term in \(y^{-k-1}\)
%  vanishes for all \(k\ge 0\), i.e.~if and only if \cref{eq:cancel} holds.
%\end{proof}

%%\subsection{Partial sequences}

In this paper, we want to compute a generating set for $\ann{\seq}$, for
a linearly recurrent \(\seq \in \seqSet\), but for algorithms we typically 
only have access to a finite number of terms of the sequence. 
Suppose we have access to the partial sequence \(\tseq{\tdim} =
(\seqe{0},\ldots,\seqe{\tdim-1})\) in \(\tseqSet{\tdim} =
\tseqSetExpand{\tdim}\), for some \(\tdim \in \ZZp\).  Similar to
\cref{eq:cancel}, a polynomial $p_0 + \cdots + p_\gamma y^\gamma$ of degree
\(\gamma<\tdim\) cancels \(\tseq{\tdim}\) if 
\begin{align}
	\label{eq:partial}
	p_0 \seqe{k} + \cdots + p_\gamma \seqe{k+ \gamma} = 0
  \text{ for all } 0 \le k< \tdim-\gamma.
\end{align}
Like for sequences over fields, here polynomials of degree \(\gamma\) which cancel
\(\tseq{\tdim}\) also cancel the whole sequence \(\seq\), provided the
discrepancy between \(\tdim\) and \(\gamma\) is sufficiently large (namely,
\(\tdim\ge\gamma+\order\)).
\begin{lemma}
  \label{lem:characterization_partial}
  Let $\seq \in \seqSet$ be linearly recurrent of order \(\order\). For any
  \(\tdim\in\ZZp\) and any \(p\in \yAlg\) with \(\deg(p) \le \tdim - \order\),
  one has \(p \in \ann{\seq}\) if and only if \(p\) cancels \(\tseq{\tdim}\).
\end{lemma}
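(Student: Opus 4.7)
The implication from left to right, namely that \(p \in \ann{\seq}\) implies that \(p\) cancels \(\tseq{\tdim}\), is immediate: the relations in \cref{eq:cancel} for \(0 \le k < \tdim - \deg(p)\) are precisely those in \cref{eq:partial}. The content of the lemma lies in the converse, so I focus on that. Writing \(\gamma = \deg(p) \le \tdim - \order\) (the case \(p = 0\) being trivial, so I may assume \(\gamma \ge 0\)), I would introduce the auxiliary sequence \(T = (T_k)_{k \ge 0} \in \seqSet\) defined by \(T_k = p_0 \seqe{k} + \cdots + p_\gamma \seqe{k+\gamma}\). Proving \(p \in \ann{\seq}\) then amounts to proving \(T_k = 0\) for all \(k \ge 0\).

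The strategy is to proceed in two steps: first, translate the partial-cancellation hypothesis into the vanishing of the initial entries of \(T\); second, propagate this vanishing to all indices via a recurrence that \(T\) inherits from \(\seq\). For the first step, the assumption that \(p\) cancels \(\tseq{\tdim}\) directly gives \(T_0 = \cdots = T_{\tdim - \gamma - 1} = 0\), and since \(\tdim - \gamma \ge \order\) by hypothesis, \(T\) thus begins with at least \(\order\) consecutive zero entries.

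For the second step, I would fix a monic generating polynomial \(q \in \yAlg\) of degree \(\order\) with \(q \in \ann{\seq}\), which exists because \(\seq\) is linearly recurrent of order \(\order\). A direct computation shows that \(q\) also cancels \(T\):
\begin{align*}
	\sum_{j=0}^{\order} q_j T_{k+j}
	&= \sum_{j=0}^{\order} q_j \sum_{i=0}^{\gamma} p_i \seqe{k+j+i} \\
	&= \sum_{i=0}^{\gamma} p_i \sum_{j=0}^{\order} q_j \seqe{k+i+j} = 0,
\end{align*}
the inner sum being the \((k+i)\)-th cancellation relation of \(q\) applied to \(\seq\), which vanishes since \(q \in \ann{\seq}\). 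As \(q\) is monic of degree \(\order\), this relation expresses \(T_{k+\order}\) as a fixed \(\alg\)-linear combination of \(T_k, \ldots, T_{k+\order-1}\) for every \(k \ge 0\); a straightforward induction from the \(\order\) leading zeros then yields \(T_k = 0\) for all \(k\), i.e., \(p \in \ann{\seq}\). I do not foresee any real obstacle; the only step worth checking with care is the interchange of summations above, which is justified by the finiteness of both sums and the commutativity of \(\alg\).
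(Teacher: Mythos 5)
Your proof is correct. The paper states this lemma without proof (presenting it as the standard extension of the field case), and your argument --- introducing the sequence \(T = (p_0 \seqe{k} + \cdots + p_\gamma \seqe{k+\gamma})_{k\ge0}\), reading off its first \(\tdim-\gamma \ge \order\) zeros from \cref{eq:partial}, and propagating them via the monic order-\(\order\) recurrence that \(T\) inherits from \(\seq\) by the summation interchange --- is exactly the intended one.
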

%%\begin{proof}
%%  Obviously, any polynomial \(p\in\ann{\seq}\) also cancels \(\tseq{\tdim}\),
%%  for any \(\tdim\in\ZZp\) greater than the degree of \(p\). Now let
%%  \(p\in\yAlg\setminus\{0\}\) such that \(\gamma = \deg(p) \le \tdim - \order\)
%%  and \(p\) cancels \(\tseq{\tdim}\).  Since \(\tdim-\gamma \ge \order\),
%%  \cref{eq:partial} yields \(\sum_{0\le i\le \gamma} p_i \seqe{k+i} = 0\) for
%%  \(0 \le k< \order\). Furthermore, since \(\seq\) is linearly recurrent of
%%  order \(\order\), there exists \(y^{\order}-\sum_{0\le j<\order-1} q_{j}y^{j}
%%  \in \yAlg\) which cancels \(\seq\), meaning that \(\seqe{k+i} = \sum_{0\le
%%  j<\order} q_{j}\seqe{k-\order+j+i}\) for any \(k\ge \order\). Therefore we
%%  get
%%  \[
%%    \sum_{0\le i\le \gamma} p_i \seqe{k+i}
%%    %% = \sum_{0\le i\le \gamma} p_i \sum_{0\le j<\order} q_{j}\seqe{k-\order+j+i}
%%    = \sum_{0\le j<\order} q_{j} \sum_{0\le i\le \gamma} p_i \seqe{k-\order+j+i}.
%%  \]
%%  Using this identity, it follows by induction on \(k\ge \order\) that the
%%  relation \(\sum_{0\le i\le \gamma} p_i \seqe{k+i} = 0\) also holds for all
%%  \(k \ge \order\). Hence \(p\in\ann{\seq}\).
%%\end{proof}

\vspace{-0.3cm}
\subsection{Bivariate interpretation and generating sets}
\label{sec:preliminaries:bivariate}

Uni-dimensional sequences of vectors in \(\seqeSet\) as above can be
interpreted as two-dimensional sequences of vectors in \(\biseqeSet\), that is,
sequences \(\biseq = (\biseqe{i,j})_{i,j\ge0}\) in \(\biseqSet =
\biseqSetExpand\). This is based on the natural injection \(\varphi :
\yAlg \to \abRing\) with
\((\varphi(x),\varphi(y))=(\alpha,\beta)\). 

Here we recall from \cite{Sakata1988,FitzpatrickNorton90} that a polynomial \(q
= \sum_{i,j} q_{ij} \alpha^i\beta^j\) in \(\abRing\) is said to
cancel a sequence \(\biseq = (\biseqe{i,j})_{i,j\ge0} \in \biseqSet\) if
\[
  \textstyle\sum_{i,j} q_{ij} \biseqe{i+k_1, j+k_2} = 0
  \text{ for all } k_1, k_2 \ge 0. 
\]
Then, let $\seq = (\seqe{0},\seqe{1},\ldots) \in \seqSet$, and define \(\biseq =
(\biseqe{i,j})_{i,j\ge0} \in \biseqSet\) such that
\(\biseqe{i,j}\in\biseqeSet\) is the coefficient of degree \(d-1-i\) of the
truncated polynomial vector \(\seqe{j} \in \seqeSet\) if \(i < d\), and
$\biseqe{i,j} = 0$ otherwise. Then, a polynomial $p\in \yAlg$ cancels $\seq$ if
and only if the polynomial $\varphi(p)$ cancels $\biseq$. Furthermore, the set
of polynomials in \(\abRing\) which cancel \(\biseq\) is an ideal of
\(\abRing\) which contains \(\alpha^\prc\), and this ideal is zero-dimensional
if and only if \(\seq\) is linearly recurrent.

In what follows, we define \(\bar\varphi(\mathcal{I}) = \genby{\{ \varphi(p)
\mid p\in \mathcal{I} \} \cup \{ \alpha^\prc \}}\) for any ideal
\(\mathcal{I}\) of \(\yAlg\), providing a correspondence between the ideals of
\(\yAlg\) and those of \(\abRing\) containing \(\alpha^\prc\).
For insight into possible ``nice'' generating sets for \(\ann{\seq}\), we
consider the lexicographic order \(\ordLex\) with \(\alpha\ordLex\beta\), and
use the fact that Gr\"obner bases of the ideals in \(\abRing\) for
this order are well understood \cite{lazard}. Below, unless mentioned
otherwise, we use \(\ordLex\) when some term order is needed, e.g.~leading
terms and Gr\"obner bases.

Consider a zero-dimensional ideal $\mathcal{I}$ in $\abRing$ that
contains a power of $\alpha$ and let $\mathcal{G}$ be its reduced Gr\"obner
basis. Let
\vspace{-0.1cm}
\[
  (\beta^{e_0}, \alpha^{d_1} \beta^{e_1}, \ldots, \alpha^{d_{t-1}} \beta^{e_{t-1}} , \alpha^{d_t})
\]
be the leading terms of the elements of $\mathcal{G}$ listed in decreasing
order, i.e.~the $e_i$'s are decreasing and the $d_i$'s are increasing. We set
$d_0=e_t=0$, and for $1 \le i \le t$ we set $\delta_i = d_i - d_{i-1}$, so that
$d_i = \delta_1 + \cdots + \delta_i$. Similarly, for $0 \le i < t$ we set
$\varepsilon_i = e_i-e_{i+1}$. Then write $\mathcal{G} = \{g_0, \ldots, g_t\}$,
with $g_i$ having leading term $\alpha^{d_{i}} \beta^{e_{i}} $; in particular
$g_t = \alpha^{d_t}=\alpha^{\delta_1 + \cdots + \delta_t}$ and $g_0$ is monic
in $\beta$.

Lazard's Theorem states the following \cite{lazard}: for $0 \le i \le t$ one
can write $g_i = \alpha^{d_i} \hat{g}_i$, with $\hat{g}_i$ monic of degree
$e_i$ in $\beta$. In addition, for $0\le i < t$, $\hat{g}_{i} =
g_{i}/\alpha^{d_{i}}$ is in the ideal generated by
\vspace{-0.1cm}
\[
  \genby{
    \hat{g}_{i+1},
    \alpha^{\delta_{i+2}} \hat{g}_{i+2},
    \dots,
    \alpha^{\delta_{i+2}+ \cdots + \delta_t}
  }
  =
  \GENBY{
    \frac{g_{i+1}}{\alpha^{d_{i+1}}},
    \frac{g_{i+2}}{\alpha^{d_{i+1}}},
    \dots,
    \frac{g_t}{\alpha^{d_{i+1}}}
  };
\]
in particular, $\alpha^{\delta_1}$ divides $g_1,\dots,g_{t}$. Lazard also
proved that a set of polynomials which satisfies these conditions is
necessarily a minimal Gr\"obner basis.

With the above notation, a minimal Gr\"obner basis of \(\mathcal{I}\) has
cardinality \(t+1\), with \(t \le \min(e_0,d_t)\) since
\(0=d_0<d_1<\cdots<d_t\) and \(0=e_t<\cdots<e_1<e_0\). Since for the reduced
Gr\"obner basis \(\mathcal{G}\) each polynomial \(g_i\) is represented by at
most \(e_0d_t\) coefficients in \(\field\), the total size of \(\mathcal{G}\)
in terms of field elements is at most \(e_0d_t\min(e_0,d_t)\). Finer bounds
for the cardinality and size of \(\mathcal{G}\) could be given using the vector
space dimension \(\dim_\field(\abRing/\mathcal{I})\).

\subsection{Univariate and bivariate approximation}
\label{sec:preliminaries:tools}

For a univariate polynomial matrix \(F \in \xmats{\mu}{\nu}\) and a positive
integer \(d\), we consider a free \(\xRing\)-module of rank \(\mu\) defined as
\vspace{-0.1cm}
\[
  \modApp{d}{F} = \{p \in \xmats{1}{\mu} \mid pF = 0 \bmod x^d\};
\]
its elements are called \emph{approximants for \(F\) at order \(d\)}
\cite{BarBul92,BecLab94}. Bases of such submodules can be represented as
\(\mu\times \mu\) nonsingular matrices over \(\xRing\) and are usually computed
in so-called \emph{reduced} forms \cite{Wolovich74} or the corresponding
canonical \emph{Popov} forms \cite{Popov72}. Extensions of these forms have
been defined to accommodate degree weights or degree constraints, and are
called \emph{shifted} reduced or Popov forms \cite{BarBul92,BecLab94,BeLaVi99}.
The algorithm \Call{PM-Basis}{} \cite{GiJeVi03} computes an approximant basis
in shifted reduced form in time \(\softO{\mu^{\expmm-1}(\mu+\nu) d}\); using
essentially two calls to this algorithm, one recovers the unique approximant
basis in shifted Popov form within the same cost bound \cite{JeaNeiVil2020}.

More generally, in the bivariate case with \(F \in \abmats{\mu}{\nu}\) and
\((d,e)\in\ZZp\), the set
\vspace{-0.1cm}
\[
  \modApp{d,e}{F} = \{p \in \abmats{1}{\mu} \mid pF = 0 \bmod (\alpha^d,\beta^e)\}
\]
is a \(\abRing\)-submodule of \(\abmats{1}{\mu}\) whose elements are called
\emph{approximants for \(F\) at order \((d,e)\)}. Such submodules are usually
represented by a \(\ord\)-Gr\"obner basis for some term order \(\ord\) on
\(\abmats{1}{\mu}\); for definitions of term orders and Gr\"obner bases for
submodules we refer to \cite{CoLiOSh05}. For \(\nu\le\mu\) algorithms based on
an iterative approach or on efficient linear algebra yield cost bounds in
\(\softO{\mu (\nu de)^2 + (\nu de)^3}\) and \(\softO{\mu(\nu de)^{\expmm-1} +
(\nu de)^\expmm}\) operations in \(\field\) respectively
\cite{Fitzpatrick97,NeigerSchost2020}, whereas a recent divide and conquer
approach costs \(\softO{(M^{\expmm}+M^2\nu)de}\), where \(M=\mu\min(d,e)\)
\cite[Prop.\,5.5]{NaldiNeiger2020}; in these cases the output is a minimal
Gr\"obner basis.

\section{Kurakin's algorithm}
\label{sec:kurakin}

In \cite{kurakin98}, Kurakin gives an algorithm based on the Berlekamp-Massey algorithm
that computes the annihilators of a partial sequence over a ring $R$ (and modules over $R$) that can be decomposed as
a disjoint union
\(
  R = \{0\} \cup R_0 \cup \cdots \cup R_{\prc-1}
\)
where
\[
  R_i = \{ r_i r^* \mid r^* \in R \text{ invertible} \}
  \text{ for some } r_i \in R.
\]
In this paper we consider $R = \alg =
\xRing/\genby{x^\prc}$; in this case the canonical choice is $r_i = x^i$, with
\[R_i = \{ x^i p^* \mid p^* \in \alg \text{ with nonzero constant term} \}.\]

Consider a partial sequence $\seq_e \in \tseqSet{e}$ of a
linearly recurrent \(\seq \in \seqSet\) of order \(\order\). Kurakin's algorithm computes $d$ polynomials $P_i \in \alg[y]$, $i = 0,\dots,d-1$, 
such that $P_i$ is a canceling polynomial of $\seq_e$ that has leading coefficient $x^i$ and is minimal in degree among all canceling polynomials with
leading coefficient $x^i$. Furthermore, one has $\ann{\seq} = \langle P_0,\dots, P_{d-1} \rangle$ provided $e \ge 2 \delta$ \cite[Thm.\,1]{Kurakin2000FPSAC}.

We first define three operations on sequences. Given a partial sequence $\seq_e$ and $c\in\alg$, 
$c \cdot \seq_e$ denotes multiplying $c$ to every element in $\seq_e$, while $y^j \cdot \seq_e$ denotes a shift of $j$ elements ---
that is, removing the first $j$ elements. Given another partial sequence $\hat\seq_{\hat{e}}$, the sum $\seq_{e} + \hat\seq_{\hat{e}}$ returns
the first $\min(e,\hat{e})$ elements of the two sequences added together element-wise.

Kurakin's algorithm iterates on $s = 0,\ldots, e-1$, keeping track of polynomials $P_{i,s}$ as well as partial sequences
$\seq_{e,i,s} = P_{i,s} \cdot \seq_{e} = \sum_{j=0}^{e-s} P_{i,s}[j]\cdot y^j\cdot \seq_{e} $,
where $P_{i,s}[j]$ is the $j$-th coefficient of $P_{i,s}$.
An invariant is that the leading coefficient of $P_{i,s}$ is $x^i$ for all $s$.
For each $s = 0, \ldots, e-1$, the algorithm essentially attempts to either create a zero by using the partial sequences 
from previous iterations with equal number of leading zeros (similar to Gaussian elimination), or shift the sequence if we cannot cancel this element. 

At each iteration $s$, let
$\mathcal{I}[k]$ be the \(\alg\)-submodule of \(\alg^\sdim\) generated by the elements $\seq_{e,i,s'}[k]$ for all $i = 0,\dots,d-1$ and $s'< s$ such that $\seq_{e,i,s'}$ has $k$ leading
zeros. Furthermore, let $\mathcal{P}[k,j]$ and $\mathcal{S}[k,j]$ be the corresponding polynomial and partial sequence to the $j$-th element in the basis of
$\mathcal{I}[k]$, $\mathcal{I}[k,j]$.
 At iteration $s$, if $\seq_{e,i,s}$ has $k$ leading zeros and $\seq_{e,i,s}[k] \in \mathcal{I}[k]$, then we can
find coefficients such that $\seq_{e,i,s}[k] - \sum_j c_j \mathcal{I}[k,j] = 0$ and  $\seq_{e,i,s} - \sum_j c_j \mathcal{S}[k,j]$ results in a
sequence with at least $k+1$ zeros since both sequences had $k$ leading zeros and we canceled $\seq_{e,i,s}[k]$. The algorithm terminates when all $\seq_{e,i,s} = 0$
(see Algorithm \ref{algo:kurakin}).

\begin{algorithm}[ht]
  \caption{\textsc{Kurakin}\( (\seq_e) \)}
  \label{algo:kurakin}
  \begin{algorithmic}[1]
    \Require{partial sequence $\seq_e$}
    \Ensure{minimal canceling polynomials of $\seq_e$}
    
    \For{$i = 0,\dots, d-1$}
    	\State set $P_{i,0} = x^i$ and $\seq_{e,i,0} = x^i \seq_{e}$
    	\State set $k$ to be index of first non-zero element of $\seq_{e,i,0}$
	\If{$\seq_{e,i,0}[k] \ne 0$}\label{line:check_zero1}
		\State add $\seq_{e,i,0}[k], P_{i,0}, \seq_{e,i,0}$ to $\mathcal{I}[k], \mathcal{P}[k], \mathcal{S}[k]$ resp.
	\EndIf
    \EndFor
    
    \For{$s = 1, \ldots, e-1$}
    	\For{$i = 0,\ldots d-1$}
    		\State set $t = 0$; $P_{i,s}^{(t)} = yP_{i,s-1}$;
        and shift $\seq_{e,i,s}^{(t)} = y\cdot \seq_{e,i,s-1}$
        \InlineIf {$\seq_{e,i,s}^{(t)} = 0$}{continue to next $i$} \label{line:goto}
        \State set $k$ to be the first non-zero index of $\seq_{e,i,s}^{(t)}$ \label{line:check_zero2}
        \InlineIf{$\seq_{e,i,s}^{(t)}[k] \notin \mathcal{I}[k]$}{continue to next $i$} \label{line:check_member}
        \State solve for $c_j$'s such that $\seq_{e,s,i}^{(t)}[k] - \sum_j c_j \mathcal{I}[k,j] = 0$ \label{line:soln}
        \State set $\seq_{e,i,s}^{(t+1)} = \seq_{e,i,s}^{(t)} - \sum_j c_j \mathcal{S}[k,j]$ \label{line:next_seq}
        \State set $P_{i,s}^{(t+1)} = P_{i,s}^{(t)} - \sum_j c_j \mathcal{P}[k,j]$ \label{line:next_pol}
        \State go to line \ref{line:goto} with $t = t+1$
    	\EndFor
	 \For{$i = 0,\ldots, d-1$}
	    	\State set $s_{e,i,s} = s_{e,i,s}^{(t)}$ and $P_{i,s} = P_{i,s}^{(t)}$
		\State set $k$ to be the index of first non-zero element of $s_{e,i,s}$
		\If {$s_{e,i,s}[k] \notin \mathcal{I}[k]$} \label{line:update_ideal}
			\State add $\seq_{e,i,s}[k], P_{i,s}, \seq_{e,i,s}$ to $\mathcal{I}[k], \mathcal{P}[k], \mathcal{S}[k]$ resp.
			\State reduce the basis of $\mathcal{I}[k]$ if needed \label{line:reduce}
		\EndIf
	    \EndFor
    \EndFor
    \For{$i = 0,\ldots,d-1$}
    	\State return $P_{i,s}$ that makes $\seq_{e,i,s} = 0$ for the first time
    \EndFor
  \end{algorithmic}
\end{algorithm}

We track the subiterations by the index $t$ for analysis; this does not play a role in the algorithm. 
Kurakin shows that the total number of subiterations across
all $s$ is $\bigO{e}$ per polynomial, bringing the total to $\bigO{ed}$ (\cite[Thm.\,2]{kurakin98}). 
However, the analysis of the runtime in \cite{kurakin98} treats all ring operations 
(including computing solution to line \ref{line:soln} of Algorithm
\ref{algo:kurakin}) as constant time operations, which is unrealistic over $\alg^n$. 
Thus, we will give a cost analysis in terms of number of field operations over $\field$.

We note that, since \(\alg^n\) is a free \(\xRing\)-module of rank \(n\) (with
a basis given by the canonical vectors of length \(n\)) and \(\xRing\) is a
principal ideal domain, any of its \(\xRing\)-submodule is free of rank at most
\(n\). As a consequence, the number of generators of $\mathcal{I}[k]$ is at
most $n$. This will allow us to bound the cost for solving submodule membership
as well as the equation $\seq_{e,s,i}^{(t)}[k] - \sum_j c_j \mathcal{I}[k,j] =
0$.

We can check membership $s_{e,i,s}[k] \in \mathcal{I}[k]$ and solve $s_{e,s,i}[k] - \sum c_j \mathcal{I}[k,j] = 0$ by finding the
right approximant basis of $$F = \begin{bmatrix} \mathcal{I}[k,0] & \cdots & \mathcal{I}[k,n-1] & s_{e,s,i}[k] \end{bmatrix}$$
in Popov form. Since $F$ has $n$ rows and at most $n+1$ columns, we can compute this in cost $\softO{n^\omega d}$ \cite{JeaNeiVil2020}. The reduction in line \ref{line:reduce} can be computed by the same approximant basis:
if $F$ has $n+1$ columns, there is a column in the approximant basis such that at least one entry has a nonzero constant term. 
By removing the corresponding $\mathcal{I}[k,j]$, we get a basis of $\mathcal{I}[k]$
of size $n$.

At lines \ref{line:next_seq} and
\ref{line:next_pol}, $S[k,j]$ and $P[k,j]$ have length and degree at most $e$ resp., making the cost of these lines 
$\softO{n(ned)} = \softO{n^2 e d}$.
Finally, using the fact that the total number of subiterations is bounded by $\bigO{ed}$, we arrive at the
total cost $\softO{ed (n^2 e d + n^\expmm d)}$.

We conclude by showing that the output of Algorithm \ref{algo:kurakin} is indeed a basis of $\ann{\seq}$ and that it forms a lexicographical Gr\"obner basis.

\begin{theorem} \label{theorem:kurakin}
  For each \(i \in \{0,\ldots,\prc-1\}\), let $P_i$ be a canceling polynomial
  of $\seq$ with leading coefficient $x^i$ that is minimal in degree among all
  polynomials with leading coefficient $x^i$. Then one has $\ann{\seq} =
  \langle P_0,\dots, P_{d-1} \rangle$. Furthermore,
  $\{\varphi(P_0),\cdots,\varphi(P_{d-1}), \alpha^d\}$ forms a Gr\"obner basis
  of $\bar\varphi(\ann{\seq})$ with respect to the lexicographic term order
  with $\alpha \ordLex \beta$.
\end{theorem}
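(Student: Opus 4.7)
My plan is to split the argument into two parts: first show the generation statement $\ann{\seq} = \genby{P_0,\ldots,P_{\prc-1}}$ by a Berlekamp-Massey-style leading-coefficient division, then check the Gr\"obner basis property by a leading-term analysis under $\ordLex$.

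For the generation part, I would argue by strong induction on $\deg Q$ for $Q \in \ann{\seq}$. If $Q \neq 0$, let $c \in \alg$ denote its leading coefficient in $y$. Since $\alg = \xRing/\genby{x^\prc}$, the element $c$ has a unique factorization $c = x^i u$ with $i \in \{0,\ldots,\prc-1\}$ and $u \in \alg$ invertible; then $u^{-1} Q \in \ann{\seq}$ has leading coefficient $x^i$, so by minimality of $P_i$ one has $\deg P_i \le \deg Q$, and $Q - u\, y^{\deg Q - \deg P_i} P_i \in \ann{\seq}$ has strictly smaller degree, closing the induction.

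For the Gr\"obner basis claim, I would first pin down leading terms under $\ordLex$. Since the $y^{\gamma_i}$-coefficient of $P_i$ is exactly $x^i$ with $\gamma_i := \deg P_i$, the $\beta^{\gamma_i}$-coefficient of $\varphi(P_i) \in \abRing$ is the canonical lift $\alpha^i$ and $\varphi(P_i)$ has no higher $\beta$-terms; hence $\lm{\ordLex}{\varphi(P_i)} = \alpha^i \beta^{\gamma_i}$, and $\lm{\ordLex}{\alpha^\prc} = \alpha^\prc$. Using that $\bar\varphi(\ann{\seq})$ is the preimage of $\ann{\seq}$ under the quotient $\abRing \to \abRing/\genby{\alpha^\prc} \simeq \yAlg$, any nonzero $q \in \bar\varphi(\ann{\seq})$ decomposes as $q = \varphi(Q) + \alpha^\prc r$ with $Q \in \ann{\seq}$ (chosen so that $\varphi(Q)$ has all $\alpha$-degrees below $\prc$) and $r \in \abRing$. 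A case split on the $\beta$-degree of $\alpha^\prc r$ versus $\deg Q$ then pins down $\lm{\ordLex}{q}$: the former strictly larger yields a multiple of $\alpha^\prc$; the former strictly smaller (or $Q = 0$ with $r\ne 0$) yields $\lm{\ordLex}{\varphi(Q)} = \alpha^{j^*} \beta^{\deg Q}$, where $j^* \ge i$ is the $\alpha$-degree of the canonical lift of the leading coefficient $c = x^i u$ of $Q$, and minimality of $P_i$ gives $\gamma_i \le \deg Q$, so $\alpha^i \beta^{\gamma_i}$ divides the leading term; in the equal case, the $\alpha$-degree gap ($<\prc$ versus $\ge\prc$) in the common $\beta$-coefficient prevents cancellation, again giving a multiple of $\alpha^\prc$.

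The main obstacle I anticipate is precisely this equal-$\beta$-degree case: one must verify that $\varphi(Q)$ and $\alpha^\prc r$ cannot conspire to cancel the apparent leading term of $q$, which relies on their contributions to any common $\beta$-coefficient living in disjoint ranges of $\alpha$-degrees. The rest is a natural bivariate reformulation of the usual Berlekamp-Massey leading-coefficient division.
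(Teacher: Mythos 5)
Your argument is correct, and it splits naturally into the two halves of the statement. For the generation claim $\ann{\seq}=\genby{P_0,\ldots,P_{\prc-1}}$, your induction via the factorization $c=x^iu$ of the leading coefficient and subtraction of $u\,y^{\deg Q-\deg P_i}P_i$ is essentially the paper's own argument (the paper phrases the same descent as a proof by contradiction). For the Gr\"obner basis claim, however, you take a genuinely different route: you identify $\bar\varphi(\ann{\seq})$ with the preimage of $\ann{\seq}$ under $\abRing\to\abRing/\genby{\alpha^\prc}\simeq\yAlg$, decompose an arbitrary $q$ as $\varphi(Q)+\alpha^\prc r$, and verify the divisibility of $\lm{\ordLex}{q}$ by one of the candidate leading terms directly from the definition; the key observation that the canonical lift $\varphi(Q)$ contributes only $\alpha$-degrees $<\prc$ to each $\beta$-coefficient while $\alpha^\prc r$ contributes only $\alpha$-degrees $\ge\prc$ correctly rules out cancellation in the equal-$\beta$-degree case. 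The paper instead starts from the reduced lexicographic Gr\"obner basis $\mathcal{G}$ of $\bar\varphi(\ann{\seq})$, pads it with the multiples $\alpha^c g_i$ so as to have one element with leading term $\alpha^i\beta^{r_i}$ for each $i$, and shows $\deg P_i=r_i$ by a two-sided inequality (reduction to zero gives $u_i\ge r_i$, minimality gives $u_i\le r_i$). Your version is more self-contained, since it does not invoke Lazard's structure theorem or the existence of the padded basis $\mathcal{G}'$; the paper's version is shorter given that Lazard's theorem is already set up in the preliminaries, and it yields the extra information that the degrees $\deg P_i$ coincide exactly with the $\beta$-degrees of the Lazard staircase. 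One small slip in your case split: the parenthetical ``or $Q=0$ with $r\ne 0$'' belongs to the first case (leading term a multiple of $\alpha^\prc$), not to the case where $\lm{\ordLex}{q}=\lm{\ordLex}{\varphi(Q)}$; with the convention $\deg_\beta(0)=-\infty$ this degenerate situation is already covered and nothing else is affected.
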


\begin{proof}
	Suppose that there exists some $Q \in \alg[y]$ with leading coefficient $x^t$ that is in $\ann{\seq}$ but $Q \notin \langle P_0, \dots, P_{d-1} \rangle$.
	Note that for any polynomial in $\alg[y]$, we can always make the leading coefficient to be some $x^t$ by pulling out the minimal power of $x$ from
	the leading coefficient and multiplying by its inverse. Now, since we assumed minimality of degrees for $P_i$'s, $\deg(Q) > \deg(P_t)$ and
	$Q' = Q - y^{\deg{Q}-\deg{P_t}} P_t \in \ann{\seq}$ has degree less than $Q$.
	By normalizing the leading coefficient of $Q'$ to be some $x^{t'}$, we can repeat the
	same process and keep decreasing the degree. This process must terminate when we encounter some $Q'$ with leading coefficient $x^{t'}$ such
	that $\deg{Q'} < \deg{P_{t'}}$, or $Q' = 0$. 
	Both cases lead to contradictions; thus, such $Q$ cannot exist and $\ann{\seq} =  \langle P_0, \dots, P_{d-1} \rangle$.
	
	Next, let $\mathcal{G} = \{g_0, \dots, g_k \}$, $g_i \in \abRing$ with leading coefficient $x^{d_i}$, 
	be the minimal reduced (lexicographic) Gr\"obner basis of $\bar\varphi(\ann{\seq})$.
	We can turn $\mathcal{G}$ into another non-minimal Gr\"obner basis by adding the polynomials $a^c g_i$, for $c = 1,\dots, d_{i+1}-1$; we define
	the resulting basis as $\mathcal{G}' = \{ g_0', \cdots, g_d' \}$, with $g'_d = \alpha^d$ and each $g_i'$ has leading term $\alpha^i \beta^{r_i}$. Furthermore, define
	$u_i$ as the degree of $P_i$ such that $\varphi(P_i)$ has leading term $\alpha^i \beta^{u_i}$.
	
	For $i = 0, \dots, d$, we have that $u_i \ge r_i$, otherwise $\mathcal{G}'$ would not reduce $\varphi(P_i)$ to zero, which $\mathcal{G}'$ must since 
	$\varphi(P_i) \in \bar\varphi(\ann{\seq})$. We also have that $u_i \le r_i$ due to the assumed minimality of degree for $P_i$'s. Thus, the leading terms of
	$\{ \varphi(P_0),\dots, \varphi(P_{d-1}), \alpha^d \}$ generate the leading terms of $\bar\varphi(\ann{\seq})$.
\end{proof}

\section{Lazy algorithm based on Kurakin's}
\label{sec:modified_kurakin}

Kurakin's algorithm requires that we keep track of all $d$ possible generators, regardless of the actual number of generators needed. For example,
consider $\seq = (1,1,2,3,5,\dots) \in \alg^\NN$ with $\ann{\seq} = \langle y^2 - y - 1 \rangle$: Kurakin's algorithm returns
$\{ x^i (y^2-y-1), 0 \le i < d\}$. In this section, we outline a modified version of Kurakin's algorithm that attempts to avoid as
many extraneous computations as possible.

In the previous example, we can see that the polynomials associated with $x^i$, $i\ge 1$, were not useful. The next definition aims to qualify precisely
the usefulness of the monomial $x^i$.
\begin{definition}
	Let $P_{i,s}$ and $\seq_{e,i,s}$ be the polynomial and sequence at the end of step $s$ associated with monomial $x^i$. 
	A monomial $x^{i_2}$ is \emph{useful} wrt to $x^{i_1}$, $i_1<i_2$, at step $s$ if at least one of two conditions is true at the end of $s$:
	\begin{itemize}
		\item[U1.] $P_{i_2,s} \ne x^{i_2-i_1} P_{i_1,s}$
		\item[U2.] let $k_{i_1}$ and $k_{i_2}$ be the index of the first non-zero element of $\seq_{e,i_1,s}$ and $\seq_{e,i_2,s}$ resp., then
		$k_{i_1} \ne k_{i_2}$
	\end{itemize}
\end{definition}

Suppose a monomial $x^{i_2}$ is not useful wrt $x^{i_1}$ at step $s$, then by negating condition U1, we have $P_{i_2,s} = x^{i_2-i_1} P_{i_1,s}$. 
Due to negation of U2, $\seq_{e,i_2,s}$ is the zero sequence if and only if $\seq_{e,i_1,s}$ is the zero sequence; so either we return 
$P_{i_2,s} = x^{t_2-t_1} P_{i_1,s}$ or we do not terminate at this step for both monomials. Finally, since $k_{i_1} = k_{i_2}$ and 
$\seq_{e,i_2,s} = x^{i_2-i_1} \seq_{e,i_1,s}$, we always have that
$\seq_{e,i_2,s}[k_{i_2}] = x^{i_2-i_1} \seq_{e,i_1,s}[k_{i_1}] \in \left( \langle \seq_{e,i_1,s}[k_{i_1}] \rangle \cup \mathcal{I}[k_{i_1}] \right)$, 
meaning we can safely ignore
$\seq_{e,i_2,s}[k_{i_2}]$ when updating $\mathcal{I}[k_{i_2}]$ at the end of step $s$.
Thus, the negation of usefulness conditions U1 and U2 implies that any computation associated with $x^{i_2}$ is not needed at step $s$.

However, as defined, U1 and U2 do not impose any conditions about the subiterations (indexed by $t$). The next lemma gives a different
characterization of the usefulness conditions in terms of $t$.

\begin{lemma}
	If $x^{i_2}$ is useful wrt to $x^{i_1}$ at some step $s$, then at some subiteration $t$ of step $s$, one of u1, u2, u3 is true
	at the start of $t$:
	\begin{itemize}
		\item[u1.] $P_{i_2,s}^{(t)} \ne x^{i_2- i_1} P_{i_1,s}^{(t)}$ 
		\item[u2.] if $P_{i_2,s}^{(t)} = x^{i_2- i_1} P_{i_1,s}^{(t)}$, then $k_{i_2}^{(t)} \ne k_{i_1}^{(t)}$
		\item[u3.] if $P_{i_2,s}^{(t)} = x^{i_2- i_1} P_{i_1,s}^{(t)}$ and $k_{i_2}^{(t)} = k_{i_1}^{(t)}$, then
		$\seq_{e,i_1,s}^{(t)}[k_{i_1}^{(t)}] \notin \mathcal{I}[k_{i_1}^{(t)}]$ and $s_{e,i_2,s}^{(t)}[k_{i_1}^{(t)}] \in \mathcal{I}[k_{i_1}^{(t)}]$
	\end{itemize}
\end{lemma}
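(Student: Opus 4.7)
My plan is to prove the contrapositive: assuming none of u1, u2, u3 holds at the start of any subiteration $t$ of step $s$, I will show that $x^{i_2}$ is not useful with respect to $x^{i_1}$ at step $s$, i.e.~that both U1 and U2 fail at the end of step $s$. The core technical claim, proved by induction on $t$, is the coupling invariant
\[
  P_{i_2,s}^{(t)} = x^{i_2-i_1} P_{i_1,s}^{(t)}
  \quad\text{and}\quad
  \seq_{e,i_2,s}^{(t)} = x^{i_2-i_1} \seq_{e,i_1,s}^{(t)}.
\]
The base case $t=0$ follows from an outer induction on $s$: at $s=0$ the initialization $P_{i,0}=x^i$, $\seq_{e,i,0} = x^i \seq_e$ gives the invariants directly, and for $s > 0$ the assignments $P_{i,s}^{(0)} = y P_{i,s-1}$ and $\seq_{e,i,s}^{(0)} = y\cdot\seq_{e,i,s-1}$ propagate the relations from the end of step $s-1$.

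For the inductive step on $t$, not-u1 gives the $P$-invariant at $t$. The $\seq$-invariant implies $k_{i_2}^{(t)} \ge k_{i_1}^{(t)}$ since multiplication by $x^{i_2-i_1}$ can only push the leading nonzero index forward, and not-u2 forces equality $k_{i_1}^{(t)} = k_{i_2}^{(t)} =: k$. From $\seq_{e,i_2,s}^{(t)}[k] = x^{i_2-i_1}\seq_{e,i_1,s}^{(t)}[k]$ and closure of $\mathcal{I}[k]$ under scalar multiplication, not-u3 then restricts to exactly two cases: either both entries at $k$ lie in $\mathcal{I}[k]$ (both subiterations continue) or neither does (both stop). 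In the "stop" case, the end-of-step values coincide with those at $t$, so the invariants immediately yield "not U1" and "not U2". In the "continue" case, I would choose $c_j^{(i_2)} := x^{i_2-i_1} c_j^{(i_1)}$, which solves $\seq_{e,i_2,s}^{(t)}[k] = \sum_j c_j^{(i_2)} \mathcal{I}[k,j]$ by the $\seq$-invariant; the updates at lines~\ref{line:next_seq}--\ref{line:next_pol} then propagate both invariants to subiteration $t+1$.

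The main obstacle is the non-uniqueness of the coefficients $c_j$: the algorithm could a priori make inconsistent choices for $i_1$ and $i_2$, breaking the coupling. My plan resolves this by exhibiting one consistent choice ($c_j^{(i_2)} = x^{i_2-i_1} c_j^{(i_1)}$) that the lazy algorithm makes whenever none of u1, u2, u3 triggers, which is all the contrapositive requires. A secondary subtlety is the degenerate case where $\seq_{e,i_2,s}^{(t)}$ becomes zero while $\seq_{e,i_1,s}^{(t)}$ does not: adopting the convention $k=\infty$ for the zero sequence puts this case under u2, so the hypothesis "not-u2" already excludes it and the invariant analysis remains valid throughout.
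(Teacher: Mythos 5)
Your proposal is correct and follows essentially the same route as the paper's proof: both argue the contrapositive, maintain the coupling invariant $P_{i_2,s}^{(t)} = x^{i_2-i_1}P_{i_1,s}^{(t)}$ (hence $\seq_{e,i_2,s}^{(t)} = x^{i_2-i_1}\seq_{e,i_1,s}^{(t)}$) across subiterations, and use this scalar-multiple relation to show that under not-u2 and not-u3 the two subiterations either both continue (with compatible coefficients $c_j$) or both terminate together, so the end-of-step data coincide and U1, U2 fail. Your explicit handling of the non-uniqueness of the $c_j$ and of the degenerate zero-sequence case merely makes precise points the paper leaves implicit.
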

\begin{proof}
	We prove that if u1, u2, and u3 are false for every subiteration $t$ and $s$,
	then U1 and U2 are false for $x^{i_2}$ wrt $x^{i_1}$.
	Suppose the conditions u1, u2, and u3 are all false for every subiteration $t$ at $s$. The negation of u1 forces $P_{i_2,s}^{(t)} = x^{i_2- i_1} P_{i_1,s}^{(t)}$
	at the start of $t$, which sets the hypothesis of u2 true, implying $k_{i_2}^{(t)} = k_{i_1}^{(t)}$. 
	Finally, since the hypothesis of u3 holds, we must have 
	$s_{e,i_1,s}^{(t)}[k_{i_1}^{(t)}] \in \mathcal{I}[k_{i_1}^{(t)}]$ or $s_{e,i_2,s}^{(t)}[k_{i_1}^{(t)}] \notin \mathcal{I}[k_{i_1}^{(t)}]$.
	The two are mutually exclusive since 
	$\seq_{e,i_2,s}^{(t)} = x^{i_2-i_1} \seq_{e,i_1,s}^{(t)}$, if $s_{e,i_1,s}^{(t)}[k_{i_1}^{(t)}] \in \mathcal{I}[k_{i_1}^{(t)}]$, then
	$s_{e,i_2,s}^{(t)}[k_{i_1}^{(t)}] \in \mathcal{I}[k_{i_1}^{(t)}]$. 
	When $s_{e,i_1,s}^{(t)}[k_{i_1}^{(t)}] \in \mathcal{I}[k_{i_1}^{(t)}]$, we can update
  \vspace{-0.1cm}
	\begin{align*}
	P_{i_1,s}^{(t+1)} &= P_{i_1,s}^{(t)} - \sum c_j \mathcal{I}[k_{i_1}^{(t)},j]\\
	P_{i_2,s}^{(t+1)} &= x^{i_2-i_1} P_{i_1,s}^{(t)} - x^{i_2-i_1} \sum c_j \mathcal{P}[k_{i_1}^{(t)},j] = x^{i_2-i_1} P_{i_1,s}^{(t+1)},
	\end{align*}
	which was already implied by the assumption that u1 is false for all $t$.
	On the other hand,
	when $s_{e,i_2,s}^{(t)}[k_{i_1}^{(t)}] \notin \mathcal{I}[k_{i_1}^{(t)}]$, 
	we also have $\seq_{e,i_1,s}^{(t)}[k_{i_1}^{(t)}] \notin \mathcal{I}[k_{i_1}^{(t)}]$, so the subiterations
	terminate and we must have  $P_{i_2,s} = x^{i_2- i_1} P_{i_1,s}$ with $k_{i_2} = k_{i_1}$. This implies U1 and U2 also do not hold for step $s$.
\end{proof}

While the converse is not true, we say a monomial $x^{i_2}$ is \emph{potentially useful} wrt $x^{i_1}$ when at some step $s$ and subiteration $t$,
at least one of the conditions u1, u2, and u3 holds. Rather than iterating through $i = 0,\dots,d-1$, we keep a list of potentially useful monomials $\mathcal{U}$
and iterate through $i \in \mathcal{U}$, with $\mathcal{U} = [0]$ initially. At each subiteration, we check to see if there exists $i' > i, i' \notin \mathcal{U}$
such that $x^{i'}$ satisfies one of u2 or u3, and add the smallest such $i'$ to $\mathcal{U}$. Note that we need not check u1 since if u1 holds, then either
u2 or u3 must have been true at some previous subiteration, thus $i'$ is already included in $\mathcal{U}$. Condition u2 can be checked in $\bigO{n}$ by
checking the valuations of all entries in $\seq_{e,i,s}[k]$ at lines \ref{line:check_zero1} and \ref{line:check_zero2}. Condition u3 can be checked
in $\bigO{\log d}$ membership computations via a binary search to find the minimal $i'$ such that $x^{i' -i } \seq_{e,i,s}[k] \in \mathcal{I}[k]$ when 
$ \seq_{e,i,s}[k] \notin \mathcal{I}[k]$ on line \ref{line:check_member}. Thus, the complexity for the subiterations do not change in terms of $\softO{\cdot}$. 
Defining $d^* = |\mathcal{U}| \le d$, this brings the total cost to $\softO{e d^* (n^2 e d + n^\omega d)}$. While we do not know how far $d^*$ 
is from the number 
 $d_{\mathrm{opt}}$
of polynomials in the minimal lexicographic Gr\"obner basis of $\bar\varphi(\ann{\seq})$,
we have observed empirically that $d^*$ is often equal or close to $d_{\mathrm{opt}}$
(see Section \ref{sec:experiment}).

\section{Via univariate approximant bases}
\label{sec:pmbasis}

\subsection{Approximants of a wide Hankel matrix}
\label{sec:pmbasis:basic}

Extending the classical theory of linearly recurrent sequences over the field
\(\field\), another approach is to consider the left kernel of the block-Hankel
matrix
\vspace{-0.1cm}
\[
  \hk{\seq,\tdim} =
  \begin{bmatrix}
    \seqe{0} & \seqe{1} & \cdots & \seqe{\tdim-1} \\
    \seqe{1} & \seqe{2} & \iddots & \seqe{\tdim} \\
    \vdots & \iddots & \iddots & \vdots \\
    \seqe{\tdim} & \seqe{\tdim+1} & \cdots & \seqe{2\tdim-1}
  \end{bmatrix}
  \in \alg^{(\tdim+1) \times (\tdim \sdim)}.
\]
Indeed, if \(\tdim\) is large enough, vectors in this kernel represent
polynomials which cancel $\seq$, and which even generate all of \(\ann{\seq}\).

\begin{lemma}
  \label{lem:hankel_kernel}
  Let $\seq \in \seqSet$ be linearly recurrent of order $\order$, and define
  \[
    \kerhk{\seq,\tdim} =
    \{p = p_0 + \cdots + p_\tdim y^\tdim \in \yAlg \mid [p_0 \; \cdots \; p_{\tdim}]\hk{\seq,\tdim} = 0\}
  \]
  for \(\tdim \in \NN\). Assume \(\tdim \ge \order\). Then \(\kerhk{\seq,\tdim}
  = \ann{\seq} \cap \yAlg_{\le\tdim}\), and in particular
  \(\kerhk{\seq,\tdim}\) is a generating set of \(\ann{\seq}\).
\end{lemma}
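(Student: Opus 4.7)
The plan is to reduce the kernel condition to the partial-cancellation condition of \cref{eq:partial}, then invoke \cref{lem:characterization_partial} to pass between canceling the partial sequence and canceling \(\seq\).

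\textbf{Step 1: Unpacking the kernel condition.} For \(p = p_0 + \cdots + p_\tdim y^\tdim \in \yAlg\), the vector-matrix product \([p_0 \; \cdots \; p_\tdim]\hk{\seq,\tdim}\) splits into \(\tdim\) blocks of \(\sdim\) columns, and reading the \(j\)-th block for \(0\le j < \tdim\) gives exactly \(\sum_{i=0}^{\tdim} p_i \seqe{i+j} = 0\). Thus \(p \in \kerhk{\seq,\tdim}\) is equivalent to saying that \(p\) cancels the partial sequence \(\tseq{2\tdim}\) in the sense of \cref{eq:partial}, taking \(e=2\tdim\) and \(\gamma=\tdim\).

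\textbf{Step 2: Applying \cref{lem:characterization_partial}.} Since \(\tdim\ge\order\), any \(p\) with \(\deg(p)\le\tdim\) satisfies \(\deg(p)\le 2\tdim - \order\). Thus \cref{lem:characterization_partial} applies with the partial sequence \(\tseq{2\tdim}\) and yields: \(p\) cancels \(\tseq{2\tdim}\) if and only if \(p\in\ann{\seq}\). Combined with Step 1, this gives the set equality \(\kerhk{\seq,\tdim}=\ann{\seq}\cap\yAlg_{\le\tdim}\).

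\textbf{Step 3: Generating set.} Since \(\seq\) is linearly recurrent of order \(\order\), there exists a monic generating polynomial \(g\in\ann{\seq}\) of degree exactly \(\order\le\tdim\); by Step 2, \(g\in\kerhk{\seq,\tdim}\). For any \(p\in\ann{\seq}\), divide by the monic \(g\) in \(\yAlg\) to get \(p = qg + r\) with \(\deg(r)<\order\le\tdim\). Then \(r = p - qg\in\ann{\seq}\) and \(\deg(r)\le\tdim\), so again by Step 2, \(r\in\kerhk{\seq,\tdim}\). Hence \(p\in\langle\kerhk{\seq,\tdim}\rangle\), proving that \(\kerhk{\seq,\tdim}\) generates \(\ann{\seq}\).

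The only subtle point is Step 1, where one must carefully verify that the block-Hankel layout of \(\hk{\seq,\tdim}\) really does encode the partial recurrence with the correct range of indices \(0\le k<\tdim\); once this correspondence is established, everything else is an immediate application of \cref{lem:characterization_partial} and polynomial division by a monic element.
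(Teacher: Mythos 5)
Your route is the same as the paper's: translate kernel membership into cancellation of a partial sequence and then invoke \cref{lem:characterization_partial}. (Your Step 3 additionally spells out the generating-set claim via division by a monic generating polynomial of degree \(\order\); the paper leaves this implicit, and your version is fine.)

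There is, however, a concrete bookkeeping gap between your Steps 1 and 2. The notion of canceling a partial sequence in \cref{eq:partial}, and hence the statement of \cref{lem:characterization_partial}, is indexed by the \emph{actual} degree \(\gamma=\deg(p)\): for \(p\) of degree \(\gamma\), ``\(p\) cancels \(\tseq{2\tdim}\)'' means \(\sum_{i=0}^{\gamma} p_i \seqe{k+i}=0\) for all \(0\le k< 2\tdim-\gamma\). Membership in \(\kerhk{\seq,\tdim}\) only yields these relations for \(0\le k<\tdim\), so when \(\gamma<\tdim\) it is strictly weaker than canceling \(\tseq{2\tdim}\): the relations for \(\tdim\le k<2\tdim-\gamma\) are not provided by the kernel condition. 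Your identification ``kernel \(\Leftrightarrow\) cancels \(\tseq{2\tdim}\), taking \(\gamma=\tdim\)'' silently pads \(p\) with zero top coefficients, but \cref{lem:characterization_partial}, as stated for \(\deg(p)\), is then not literally applicable in the direction you need (kernel \(\Rightarrow\) \(\ann{\seq}\)); the reverse inclusion is unaffected. The fix is exactly what the paper does: the \(\tdim\) kernel relations say precisely that \(p\) cancels \(\tseq{\tdim+\gamma}\) with its actual degree \(\gamma\), and the hypothesis \(\deg(p)=\gamma\le(\tdim+\gamma)-\order\) of \cref{lem:characterization_partial} reduces to \(\order\le\tdim\), which is the standing assumption. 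With that adjustment the rest of your argument goes through.
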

\begin{proof}
  Let \(p = p_0 + \cdots + p_\tdim y^\tdim \in \yAlg\) and \(\gamma = \deg(p)
  \le \tdim\). Then \(p \in \kerhk{\seq,\tdim}\) if and only if \([p_0 \;
  \cdots \; p_{\tdim}]\hk{\seq,\tdim} = 0\), and by definition of canceling
  partial sequences this exactly means that \(p\) cancels
  \(\tseq{\tdim+\gamma}\). Now, \(\deg(p) = \gamma \le \tdim+\gamma-\order\)
  holds under the assumption \(\tdim\ge\order\), hence \(p\) cancels
  \(\tseq{\tdim+\gamma}\) if and only if \(p \in \ann{\seq}\) by
  \cref{lem:characterization_partial}. It follows that \(\kerhk{\seq,\tdim}\)
  generates \(\ann{\seq}\), since there exists a generating set of
  \(\ann{\seq}\) whose polynomials all have degree at most \(\order\).
\end{proof}

Computing the left kernel of \(\hk{\seq,\tdim}\) can be done via univariate
approximation. Indeed, calling \(F \in \xmats{(\tdim+1)}{(\tdim\sdim)}\) the
natural lifting of \(\hk{\seq,\tdim}\), an approximant basis of \(F\) at order
\(d\) gives a generating set of that left kernel. As recalled in
\cref{sec:preliminaries:tools}, using \Call{PM-Basis}{}, a basis of
\(\modApp{d}{F}\) in shifted reduced or Popov form can be computed in
\(\softO{\tdim^{\expmm-1} (\tdim+\tdim\sdim) \prc} = \softO{\tdim^{\expmm}
\sdim \prc}\) operations in \(\field\).

\subsection{Speed-up by compression using structure}
\label{sec:pmbasis:compress}

Now we show that, when \(\sdim\) is large, one can speed up the above approach
by a randomized ``compression'' of the matrix \(\hk{\seq,\tdim}\). Precisely,
taking a random constant matrix \(C \in \mats{(\tdim\sdim)}{(\tdim+1)}\) and
performing the right-multiplication \(FC\), one obtains a square
\((\tdim+1)\times(\tdim+1)\) matrix such that \(\modApp{d}{F} =
\modApp{d}{FC}\) holds with good probability. The cost of the approximant basis
computation is thus reduced to \(\softO{\tdim^\expmm \prc}\) operations in
\(\field\), and the right-multiplication can be done efficiently by leveraging
the block-Hankel structure of \(F\).

\begin{theorem}
  \label{thm:pmbasis-compressed}
  \cref{algo:pmbasis-compressed} takes as input an integer \(\prc\in\ZZp\),
  vectors \(F_0,\ldots,F_{\mu+\tdim-2} \in \xmats{1}{\sdim}\) of degree less
  than \(\prc\), and a shift \(w\in\ZZp^\mu\), and uses
  \(\softO{\mu\tdim\sdim\prc + \mu^\expmm d}\) operations in \(\field\) to
  compute a \(w\)-Popov matrix \(P\in\xmats{\mu}{\mu}\) of degree at most
  \(d\).  It chooses at most \(\mu\tdim\sdim\) elements independently and
  uniformly at random from a subset of \(\field\) of cardinality \(\kappa\),
  and \(P\) is the \(w\)-Popov basis of \(\modApp{\prc}{F}\) with probability
  at least \(1-\frac{\mu}{\kappa}\), where $F$ is the block-Hankel matrix
  \begin{equation}
    \label{eqn:hankel_approx_input}
    F =
    \begin{bmatrix}
      F_{0} & F_{1} & \cdots & F_{\tdim-1} \\
      F_{1} & F_{2} & \iddots & F_{\tdim} \\
      \vdots & \iddots & \iddots & \vdots \\
      F_{\mu-1} & F_{\mu} & \cdots & F_{\mu+\tdim-2}
    \end{bmatrix}
    \in \xmats{\mu}{(\tdim \sdim)}.
  \end{equation}
\end{theorem}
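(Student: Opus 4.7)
My plan is to design \cref{algo:pmbasis-compressed} in three steps: (i) sample a random constant matrix $C \in \mats{(\tdim \sdim)}{\mu}$ whose entries are drawn uniformly and independently from a subset $\Omega \subseteq \field$ of cardinality $\kappa$; (ii) form the compressed matrix $G = FC \in \xmats{\mu}{\mu}$; (iii) call \Call{PM-Basis}{} on $G$ at order $\prc$ and convert its output to shifted Popov form via the procedure of \cite{JeaNeiVil2020}. Since the $w$-Popov basis of a submodule is unique, correctness of the output reduces to proving that the module equality $\modApp{\prc}{F} = \modApp{\prc}{FC}$ holds with probability at least $1-\mu/\kappa$.

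For the cost analysis, step (iii) takes $\softO{\mu^\omega \prc}$ field operations by the approximant basis bound recalled in \cref{sec:preliminaries:tools}, applied to the square $\mu \times \mu$ input $G$ of degree less than $\prc$. For step (ii), I would exploit the block-Hankel structure of $F$: each column of $FC$ is a block-Hankel matrix-by-vector product, which decomposes via the classical polynomial-multiplication trick into $\sdim$ scalar Hankel-by-vector products of size $\mu \times \tdim$ with polynomial coefficients in $\xRing$ truncated modulo $x^\prc$. Each such Hankel-by-vector product costs $\softO{(\mu+\tdim)\prc}$, so one column of $FC$ costs $\softO{\sdim(\mu+\tdim)\prc}$ and the full product $FC$ costs $\softO{\mu \sdim(\mu+\tdim)\prc}$, which is absorbed into $\softO{\mu \tdim \sdim \prc + \mu^\omega \prc}$.

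For correctness, the inclusion $\modApp{\prc}{F} \subseteq \modApp{\prc}{FC}$ is immediate. The reverse inclusion, which is the crux, amounts to showing that the $\xRing/\genby{x^\prc}$-linear map ``right multiplication by $C$'' is injective on the submodule $V = \{p F \bmod x^\prc : p \in \xmats{1}{\mu}\}$ of $(\xRing/\genby{x^\prc})^{1 \times \tdim \sdim}$. A key structural observation is that $V$ is generated over $\xRing/\genby{x^\prc}$ by the $\mu$ rows of $F\bmod x^\prc$; hence by Nakayama's lemma the quotient $V/xV$ is an $\field$-vector space of dimension at most $\mu$. Reducing the failure condition ``there exists $\pi \in V\setminus\{0\}$ with $\pi C \equiv 0 \pmod{x^\prc}$'' to a failure of injectivity of $C$ acting as an $\field$-linear map on $V/xV$ would then encode correctness as the non-vanishing of a polynomial in the entries of $C$ of total degree at most $\mu$, yielding the $\mu/\kappa$ failure probability via Schwartz--Zippel.

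The main obstacle I foresee is this reduction from the ``$\bmod x^\prc$'' injectivity question to the ``$\bmod x$'' one. Concretely, writing $\pi = x^a \tilde\pi \in V\setminus\{0\}$ with $\tilde\pi(0) \neq 0$, the condition $\pi C \equiv 0 \pmod{x^\prc}$ does imply $\tilde\pi(0) C = 0$ in $\field^{1 \times \mu}$; but one must verify that the class of $\tilde\pi(0)$ lives in $V/xV$ up to the appropriate identifications, and that no degeneracy of higher $x$-layers can create a failure undetectable at the leading-coefficient level. This likely requires a Nakayama-style argument on the quotient $\ker(F'C)/\ker(F')$ where $F' = F \bmod x^\prc$, or an explicit determinantal construction from a minimal $\xRing/\genby{x^\prc}$-generating family of $V$, for which an explicit witness $C$ making the compression lossless certifies the non-vanishing of the candidate polynomial. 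This is where I expect the main technical effort to lie in completing the probability estimate.
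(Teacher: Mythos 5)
Your overall architecture is the right one and matches the paper: compress $F$ to $FC$ with a random constant $C$, run \Call{PM-Basis}{} (plus the Popov-form conversion of \cite{JeaNeiVil2020}) on the square compressed matrix, reduce correctness to the module equality $\modApp{\prc}{F}=\modApp{\prc}{FC}$ via uniqueness of the $w$-Popov basis, note that one inclusion is trivial, and charge $\softO{\mu^{\expmm}\prc}$ for the approximant basis and a quasi-linear cost for $FC$ via Hankel-times-vector products. The gap is exactly where you place it, but the tool you reach for — Nakayama on $V/xV$ together with injectivity of $C$ ``at the leading-coefficient level'' of $V$ — targets the wrong invariant and cannot be patched as stated. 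A failure witness $\pi\in V\setminus\{0\}$ with $\pi C\equiv 0 \bmod x^\prc$ may have $\pi(0)=0$, and after writing $\pi=x^a\tilde\pi$ the vector $\tilde\pi(0)$ is in general \emph{not} the constant term of any element of $V$: already for $F=[x]$ with $\prc=2$ one has $V=x\alg$, so every element of $V$ (and every class in $V/xV$, mapped to constant terms) evaluates to $0$ at $x=0$, and no test built on constant terms of elements of $V$ can separate good $C$ from bad $C$ — yet compression succeeds precisely when $C\neq 0$. The correct $\le\mu$-dimensional space is the space of $x$-adic \emph{leading} coefficients of the row module $\xmats{1}{\mu}F$ (equivalently, the left annihilator of $K(0)$ for a right kernel basis $K$ of $F$); showing that this space has dimension $\rho=\operatorname{rank}F$ and that $\tilde\pi(0)$ always lands in it requires a valuation/Smith-form argument over $\field[[x]]$ that your sketch does not supply. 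So the probability estimate, which is the heart of the theorem, remains open in your proposal.

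The paper closes this by working on the kernel side rather than the image side, which avoids the valuation analysis entirely. \cref{lem:pmbasis-proba} takes a right kernel basis $K\in\xmats{\nu}{(\nu-\rho)}$ of $F$ and invokes \cite{ZhoLab13} to guarantee that $K(0)$ has full column rank; DeMillo--Lipton--Schwartz--Zippel then bounds by $r/\kappa\le\mu/\kappa$ the probability that $[C\;\;K(0)]$ is singular. \cref{lem:pmbasis-wide} converts this into the module equality: with $N=[C\;\;K]$ nonsingular and $\det(N)(0)\neq 0$, any $p$ with $pFC=x^\prc q$ satisfies $pFN=x^\prc[q\;\;0]$, hence $pF=x^\prc[q\;\;0]\,\mathrm{Adj}(N)/\det(N)$, and coprimality of $\det(N)$ with $x^\prc$ forces $pF\equiv 0\bmod x^\prc$. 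If you want to salvage your image-side route, replace $V/xV$ by the leading-coefficient space above (the mod-$x$ reduction of the saturated row module); the two conditions ``$[C\;\;K(0)]$ invertible'' and ``$C$ injective on that space'' are then equivalent, and Schwartz--Zippel applies to a $\rho\times\rho$ minor of degree $\rho\le\mu$.
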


When applied to the computation of \(\ann{\seq}\) with \(\mu=\tdim+1\), the
cost becomes \(\softO{\tdim^2\sdim\prc + \tdim^\expmm\prc}\). Below we focus on
the case of interest \(\mu\le\tdim\sdim\), since when \(\tdim\sdim \in
\bigO{\mu}\) this \(w\)-Popov approximant basis is computed deterministically
by \Call{PM-Basis}{} at a cost of \(\softO{\mu^\expmm d}\) operations in
\(\field\). Our approach is based on the following two lemmas.

\begin{lemma}
  \label{lem:pmbasis-wide}
  Let \(F \in \xmats{\mu}{\nu}\) and \(\prc \in \ZZp\). Let \(C \in
  \xmats{\nu}{r}\) and \(K \in \xmats{\nu}{(\nu-r)}\), for some \(r \in
  \{0,\ldots,\nu\}\), such that \(FK = 0\) and \([C(0) \;\; K(0)] \in
  \field^{\nu \times \nu}\) is invertible. Then, \(r\ge \rho\) where \(\rho\)
  is the rank of \(F\), and \(\modApp{\prc}{F} = \modApp{\prc}{FC}\).
\end{lemma}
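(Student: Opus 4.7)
The plan is to split the statement into its two claims and handle each by standard linear algebra arguments, one over the fraction field \(\field(x)\) and the other over the residue ring \(\xRing/\genby{x^\prc}\).

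For the rank bound \(r \ge \rho\), I would work over \(\field(x)\): the right kernel of \(F\) viewed as an \(\field(x)\)-linear map is a subspace of \(\field(x)^\nu\) of dimension \(\nu-\rho\), and it contains the columns of \(K\) since \(FK=0\). The hypothesis that \([C(0)\;\;K(0)]\) is invertible over \(\field\) forces \(K(0)\) to have full column rank \(\nu-r\), which in turn forces the \(\nu-r\) columns of \(K\) to be \(\field(x)\)-linearly independent. This gives \(\nu-r \le \nu-\rho\), hence \(r \ge \rho\).

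For the equality \(\modApp{\prc}{F} = \modApp{\prc}{FC}\), the inclusion \(\modApp{\prc}{F} \subseteq \modApp{\prc}{FC}\) is immediate: \(pF \equiv 0 \bmod x^\prc\) implies \(pFC \equiv 0 \bmod x^\prc\). For the reverse inclusion, the key step is to exploit the \(\nu \times \nu\) matrix \([C\;\;K]\): its determinant satisfies \(\det([C\;\;K])(0) = \det([C(0)\;\;K(0)]) \ne 0\), so it has nonzero constant term and is therefore a unit in \(\xRing/\genby{x^\prc}\); by the adjugate formula, \([C\;\;K]\) itself is invertible modulo \(x^\prc\). Given \(p \in \modApp{\prc}{FC}\), I would then compute
\[
  pF\,[C\;\;K] = [\,pFC \;\; pFK\,] \equiv 0 \bmod x^\prc,
\]
using \(FK=0\) to annihilate the right block and \(p \in \modApp{\prc}{FC}\) for the left block. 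Right-multiplying by the inverse of \([C\;\;K]\) modulo \(x^\prc\) yields \(pF \equiv 0 \bmod x^\prc\), i.e., \(p \in \modApp{\prc}{F}\).

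The main conceptual point, and what I expect to require the most care in writing, is the upgrade from invertibility of the constant matrix \([C(0)\;\;K(0)]\) over \(\field\) to invertibility of the polynomial matrix \([C\;\;K]\) modulo \(x^\prc\). This is where the two halves of the hypothesis (on \(C\) and on \(K\)) combine, and it is also the only place that genuinely uses that we are working modulo a power of \(x\) rather than over a general ring; once established, the remainder of the argument is formal.
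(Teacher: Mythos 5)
Your proposal is correct and follows essentially the same route as the paper: both parts hinge on the invertibility of $N=[C\;\;K]$ guaranteed by $\det(N)(0)=\det([C(0)\;\;K(0)])\neq 0$, first to get full column rank of $K$ (hence $r\ge\rho$) and then to pass from $pFN\equiv 0$ to $pF\equiv 0 \bmod x^\prc$. The only cosmetic difference is that you invert $N$ in $\xRing/\genby{x^\prc}$ directly, whereas the paper inverts $N$ over the fractions and concludes via the adjugate and coprimality of $\det(N)$ with $x^\prc$; the two arguments are equivalent.
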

\begin{proof}
  Let \(N = [C \;\; K] \in \xmats{\nu}{\nu}\). The assumption that \(N(0)\) is
  invertible ensures that \(N\) is nonsingular (since \(\det(N)(0) = \det(N(0))
  \neq 0\)), and therefore \(K\) has full rank \(\nu-r\). The assumption that
  the columns of \(K\) are in the right kernel of \(F\), which has rank
  \(\nu-\rho\), implies that \(\nu-r \le \nu-\rho\) and therefore \(r\ge \rho\).

  The inclusion \(\modApp{\prc}{F} \subset \modApp{\prc}{FC}\) is obvious. For
  the other inclusion, let \(p \in \modApp{\prc}{FC}\), i.e.~there exists \(q
  \in \xmats{1}{r}\) such that \(pFC = x^\prc q\). It follows that \(pFN =
  x^\prc [q \;\; 0]\), and thus
  \[
    pF = x^\prc [q \;\; 0] N^{-1}
    = \frac{x^\prc [q \;\; 0] \mathrm{Adj}(N)}{\det(N)}
  \]
  where \(\mathrm{Adj}(N) \in \xmats{\nu}{\nu}\) is the adjugate of \(N\). Our
  assumption \(\det(N)(0) \neq 0\) means that \(x^\prc\) and \(\det(N)\) are
  coprime, hence \(\det(N)\) divides \([q \;\; 0] \mathrm{Adj}(N)\), and \(pF =
  0 \bmod x^\prc\) follows.
\end{proof}

\begin{lemma}
  \label{lem:pmbasis-proba}
  Let \(F \in \xmats{\mu}{\nu}\) with rank \(\rho\) and \(\mu\le\nu\), and let
  \(r\in\{\rho,\ldots,\mu\}\). Let \(\mathcal{R}\) be a finite subset of
  \(\field\) of cardinality \(\kappa\in\ZZp\), and let \(C \in \mats{\nu}{r}\)
  with entries chosen independently and uniformly at random from
  \(\mathcal{R}\). Then, the probability that there exists \(K \in
  \xmats{\nu}{(\nu-r)}\) such that \([C \;\; K(0)]\) is invertible and \(FK =
  0\) is at least  \(1-\frac{r}{\kappa}\); furthermore if \(\field\) is finite and
  \(\mathcal{R}=\field\), this probability is at least
  \(\prod_{i=1}^{r}(1-\kappa^{-i})\).
\end{lemma}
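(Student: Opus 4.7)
The plan is to use the Smith normal form of $F$ over the PID $\xRing$ to produce an explicit polynomial matrix $K$ with $FK = 0$ and $K(0)$ of prescribed rank, and then to bound the probability that $[C \;\; K(0)]$ is invertible by applying Schwartz--Zippel to a single determinantal polynomial in the entries of $C$.

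First, by Smith normal form there is a unimodular $V \in \xmats{\nu}{\nu}$ (so $\det V \in \field \setminus \{0\}$, hence $V(0)$ is invertible) such that $FV = [F' \;\; 0]$ with $F' \in \xmats{\mu}{\rho}$ of rank $\rho$. Taking $K_0$ to be the last $\nu - \rho$ columns of $V$ gives $FK_0 = 0$, and $K_0(0)$ has full column rank $\nu - \rho$ because $V(0)$ is invertible. Since $\nu - \rho \ge \nu - r$, fix any $M_0 \in \field^{(\nu-\rho) \times (\nu-r)}$ of rank $\nu - r$ and set $K := K_0 M_0 \in \xmats{\nu}{(\nu-r)}$; then $FK = 0$, and $Q := K(0) = K_0(0) M_0 \in \field^{\nu \times (\nu-r)}$ has full column rank $\nu - r$.

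Now view $\Delta(C) := \det\bigl([C \;\; Q]\bigr)$ as a polynomial in the $\nu r$ entries of $C$. It is multilinear in the $r$ columns of $C$, hence of total degree at most $r$; and it is not the zero polynomial, because the $\nu - r$ independent columns of $Q$ can be completed by $r$ well-chosen vectors into a basis of $\field^\nu$, giving a nonzero evaluation. Applying Schwartz--Zippel with $C$ having entries iid uniform in $\mathcal{R}$ then gives $\Pr[\Delta(C) = 0] \le r/\kappa$, and whenever $\Delta(C) \ne 0$ the matrix $K$ above witnesses the desired property, yielding the bound $1 - r/\kappa$.

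For the sharper bound in the finite-field case $\mathcal{R} = \field$, observe that the canonical projection $\pi : \field^\nu \to \field^\nu / \mathrm{colspan}(Q) \cong \field^r$ is surjective, so projecting the $r$ iid uniform columns of $C$ yields $r$ iid uniform vectors in $\field^r$; then $[C \;\; Q]$ is invertible iff these $r$ vectors form a basis of $\field^r$, which occurs with probability exactly $\prod_{i=1}^{r}(1 - \kappa^{-i})$ by the standard count of ordered bases of $\field^r$. The key subtlety, and the reason for routing the construction through Smith normal form rather than an arbitrary right kernel basis of $F$, is ensuring that $K_0(0)$ has full column rank: without this, the set of attainable $K(0)$'s would be too restricted to cover a random complement of the column span of $C$.
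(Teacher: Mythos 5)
Your proof is correct, and its overall architecture matches the paper's: exhibit one fixed $K$ with $FK=0$ and $K(0)$ of full column rank $\nu-r$, then bound $\Pr[\det([C\;\;K(0)])=0]$ by DeMillo--Lipton--Schwartz--Zippel applied to a degree-$r$ nonzero polynomial in the entries of $C$. The one substantive difference is how the full-rank-at-$0$ kernel matrix is obtained. The paper takes a right kernel basis $B\in\xmats{\nu}{(\nu-\rho)}$ of $F$ and invokes the fact that such bases admit unimodular row bases (so $V B = I_{\nu-\rho}$ for some $V$, forcing $B(0)$ to have full rank), then truncates $B$ to its first $\nu-r$ columns; you instead column-reduce $F$ by a unimodular $V$ coming from the Smith normal form, read off $K_0$ as the last $\nu-\rho$ columns of $V$ (where $V(0)$ invertible is immediate from $\det V\in\field\setminus\{0\}$), and post-multiply by a constant rank-$(\nu-r)$ matrix $M_0$. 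The two constructions are interchangeable --- taking $M_0$ to be columns of the identity recovers the paper's choice --- but yours is more self-contained, trading the citation of the kernel-basis unimodularity lemma for the standard Smith normal form. You also spell out the quotient-space counting argument behind the exact probability $\prod_{i=1}^{r}(1-\kappa^{-i})$ in the finite-field case, which the paper only asserts. One small caveat on your closing remark: the paper's ``right kernel basis'' is a module basis of the kernel, not an arbitrary spanning set, so its constant term is automatically of full rank by the cited lemma; the genuine failure mode you are guarding against would be a matrix like $xB$ whose columns generate the kernel only over the fraction field.
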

\begin{proof}
  Consider a right kernel basis \(B \in \xmats{\nu}{(\nu-\rho)}\) for \(F\).
  Then \(B\) has unimodular row bases \cite[Lem.\,3.1]{ZhoLab13}, implying that
  there exists \(V \in \xmats{(\nu-\rho)}{\nu}\) such that \(VB = I_{\nu-\rho}\).
  In particular \(V(0)B(0) = I_{\nu-\rho}\) and therefore \(B(0)\) has full
  rank \(\nu-\rho\). Define \(K \in \xmats{\nu}{(\nu-r)}\) as the matrix formed
  by the first \(\nu-r\) columns of \(B\) (recall \(\nu-r \le \nu-\rho\) by
  assumption). Then \(FK=0\). Furthermore \(K(0)\) has rank \(\nu-r\), hence
  the DeMillo-Lipton-Schwartz-Zippel lemma implies that \([C \;\; K(0)] \in
  \mats{\nu}{\nu}\) is singular with probability at most \(r/\kappa\)
  \cite{DeMilloLipton78,Schwartz80,Zippel79}. If \(\field\) is finite and
  \(\mathcal{R}=\field\) then \([C \;\; K(0)]\) is invertible with probability
  exactly \(\prod_{i=1}^{r}(1-\kappa^{-i})\).
\end{proof}

These lemmas lead to \cref{algo:pmbasis-compressed} and
\cref{thm:pmbasis-compressed}; indeed computing \(FC\) has quasi-linear cost
\(\softO{\mu\tdim\sdim\prc}\) thanks to the block-Hankel structure of \(F\),
and then the call \(\Call{PM-Basis}{d,FC,w}\) costs \(\softO{\mu^\expmm\prc}\)
operations as recalled in \cref{sec:preliminaries:tools}.

\begin{algorithm}[htb]
  \caption{\textsc{Hankel-PM-Basis}\((d,F,w)\)}
  \label{algo:pmbasis-compressed}
  \begin{algorithmic}[1]
    \Require{integers \(\prc,\mu,\tdim,\sdim \in\ZZp\), vectors \(F_0,\ldots,F_{\mu+\tdim-2} \in \xmats{1}{n}\) of degree less than \(\prc\), a shift \(w \in \ZZp^\mu\)}
    \Ensure{a \(w\)-Popov matrix \(P \in \xmats{\mu}{\mu}\) of degree at most \(\prc\)}
    \State \(F \in \xmats{\mu}{(\tdim \sdim)} \gets \) form the block-Hankel matrix as in \cref{eqn:hankel_approx_input}
    \InlineIf{\(\mu \ge \tdim\sdim\)}{\Return \(\Call{PM-Basis}{d,F,w}\)}
    \State Choose \(r \in \{\rho,\ldots,\mu\}\) where \(\rho\) is the rank of
    \(F\) (by default, choose \(r=\mu\) if no information is known on \(\rho\))
    \State Fill a matrix \(C \in \mats{(\tdim\sdim)}{r}\) with entries 
    chosen uniformly and independently at random from a subset of \(\field\)
    of cardinality \(\kappa\)
    \State Compute \(FC \in \xmats{\mu}{r}\) (exploiting the Hankel structure of \(F\))
    \State \Return \(\Call{PM-Basis}{d,FC,w}\)
  \end{algorithmic}
\end{algorithm}

Note that \(1 - r/\kappa \ge 3/4\) as soon as \(\kappa\ge4\mu\) (which implies
\(\kappa\ge4r\)); furthermore \(\prod_{i=1}^{r}(1-\kappa^{-i})\ge3/4\) already
for \(\kappa=7\). The randomization is of the Monte Carlo type, since the
algorithm may return \(P\) which is not a basis of \(\modApp{\prc}{F}\). Still,
since the expected \(w\)-Popov basis \(P\) of \(\modApp{\prc}{F}\) is unique,
one can easily increase the probability of success by repeating the randomized
computation and following a majority rule. Another approach is to rely on the
non-interactive, Monte Carlo certification protocol of \cite{GiorgiNeiger2018},
which has lower cost than \cref{algo:pmbasis-compressed} but requires a larger
field \(\field\); this first asks to compute the coefficient of degree \(d\) of
\(PF\), which here can be done via bivariate polynomial multiplication in time
\(\softO{\mu\tdim\sdim\prc}\) thanks to the structure of \(F\). For a given
output \(P\), this certification can be repeated for better confidence in \(P\)
(in which case the coefficient of degree \(d\) of \(PF\) needs only be computed
once).

\section{Via bivariate Pad\'e approximation}
\label{sec:bivariate_pade}

Now, we propose another approach which directly uses the interpretation of
canceling polynomials as denominators of the generating series of the sequence
(see \cref{lem:series_characterization}). The next lemma describes more
precisely the link between the annihilator and these denominators when we have
access to a partial sequence, that is, denominators of the generating series
truncated at some order. One can also view this lemma as a description of the
kernel of the univariate Hankel matrix \(\hk{\seq,\tdim}\) via bivariate Pad\'e
approximation.

\begin{lemma}
  \label{lem:series_characterization_truncated}
  Let $\seq \in \seqSet$ be linearly recurrent of order \(\order\), and for
  \(\tdim\in\NN\) define \(G = \sum_{j < 2\tdim} \seqe{j} y^{2\tdim-1-j} \in
  \yAlg^\sdim\) and
  \[
    \mathcal{P}_{\seq,\tdim} =
    \{ p \in \yAlg_{\le \tdim} \mid pG = q \bmod y^{2\tdim}
      \text{ for some } q \in \yAlg^{\sdim}_{<\tdim} \}.
  \] 
  Assume \(\tdim \ge \order\). Then \(\mathcal{P}_{\seq,\tdim} = \ann{\seq}
  \cap \yAlg_{\le\tdim}\), and in particular \(\mathcal{P}_{\seq,\tdim}\) is a
  generating set of \(\ann{\seq}\); furthermore for any \(p \in
  \mathcal{P}_{\seq,\tdim}\) the corresponding \(q \in \yAlg^{\sdim}_{<\tdim}\)
  satisfies \(\deg(q)<\deg(p)\).
\end{lemma}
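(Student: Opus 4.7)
My plan is to reduce the lemma to \cref{lem:hankel_kernel} by directly identifying $\mathcal{P}_{\seq,\tdim}$ with the Hankel kernel $\kerhk{\seq,\tdim}$. The key observation is that the ``reversed'' truncated generating function $G$ is rigged precisely so that coefficient extraction from $pG$ recovers the linear relations encoded by the rows of the block-Hankel matrix $\hk{\seq,\tdim}$; the correspondence is the index change $k = 2\tdim - 1 - m$ between a monomial $y^m$ of $pG$ and a shift of $\seq$.

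More concretely, writing $p = p_0 + p_1 y + \cdots + p_\tdim y^\tdim$ and using that $[y^\ell] G = \seqe{2\tdim-1-\ell}$ for $\ell \in \{0, \ldots, 2\tdim-1\}$, the coefficient of $y^m$ in $pG$ equals $\sum_i p_i \seqe{2\tdim - 1 - m + i}$, where $i$ ranges over the indices for which $[y^{m-i}]G$ is defined. For $m \in \{\tdim, \ldots, 2\tdim - 1\}$ this range collapses to $i \in \{0, \ldots, \tdim\}$, and setting $k = 2\tdim - 1 - m \in \{0, \ldots, \tdim - 1\}$ recovers exactly the $k$-th entry of the row product $[p_0 \;\; \cdots \;\; p_\tdim]\hk{\seq,\tdim}$. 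The membership condition $pG \equiv q \pmod{y^{2\tdim}}$ with $\deg(q) < \tdim$ is precisely the vanishing of these $\tdim$ coefficients, so $\mathcal{P}_{\seq,\tdim} = \kerhk{\seq,\tdim}$; combining with \cref{lem:hankel_kernel} under the hypothesis $\tdim \ge \order$ yields $\mathcal{P}_{\seq,\tdim} = \ann{\seq} \cap \yAlg_{\le\tdim}$, which is a generating set of $\ann{\seq}$.

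The final claim $\deg(q) < \deg(p)$ is then obtained by reusing the same coefficient expansion in a wider range of $m$. Any $p \in \mathcal{P}_{\seq,\tdim}$ now lies in $\ann{\seq}$, so it satisfies $\sum_{i=0}^{\deg(p)} p_i \seqe{k+i} = 0$ for every $k \ge 0$ (the cancellation property, equivalent via \cref{lem:series_characterization} to $p\gens$ being a polynomial). The coefficient of $y^m$ in $q = pG \bmod y^{2\tdim}$ for $m \in \{\deg(p), \ldots, \tdim - 1\}$ --- that is, the monomials that would violate the degree bound --- evaluates by the same formula to $\sum_{i=0}^{\deg(p)} p_i \seqe{(2\tdim-1-m) + i}$, which is a canceling relation at the shift $k = 2\tdim - 1 - m \ge \tdim \ge 0$, and hence vanishes. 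The only real obstacle is the bookkeeping of summation ranges --- confirming that the ``horizontal'' slots $m \in \{\tdim, \ldots, 2\tdim-1\}$ align with the rows of $\hk{\seq,\tdim}$ and that the shifts $k$ used in the degree argument remain nonnegative so that the recurrence applies.
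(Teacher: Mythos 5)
Your proposal is correct. The first half is exactly the paper's argument: you expand the coefficient of $y^{2\tdim-1-k}$ of $pG$ for $0\le k<\tdim$, check that the summation range is full because $\deg(p)\le\tdim\le 2\tdim-1-k$, identify these coefficients with the entries of $[p_0\;\cdots\;p_\tdim]\hk{\seq,\tdim}$, conclude $\mathcal{P}_{\seq,\tdim}=\kerhk{\seq,\tdim}$, and invoke \cref{lem:hankel_kernel}. Where you diverge is the final claim $\deg(q)<\deg(p)$: you reuse the same coefficient extraction for the slots $m\in\{\deg(p),\ldots,\tdim-1\}$ and observe that each such coefficient is a canceling relation of $\seq$ at the nonnegative shift $k=2\tdim-1-m\ge\tdim$, hence zero since $p\in\ann{\seq}$. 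The paper instead writes $pG=y^{2\tdim}p\gens - p\sum_{j\ge 2\tdim}\seqe{j}y^{2\tdim-1-j}$, uses \cref{lem:series_characterization} to see that $p\gens$ is a polynomial, and identifies $q$ with the tail term, whose degree is less than $\deg(p)$. Your version is more elementary and self-contained (it needs only the defining recurrence, not the generating-series characterization), at the cost of a little more index bookkeeping, which you carry out correctly; both routes are equally valid.
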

\begin{proof}
  Let \(p=p_0+\cdots+p_\gamma y^\gamma \in \yAlg_{\le \tdim}\) where \(\gamma =
  \deg(p)\). Then \(p \in \mathcal{P}_{\seq,\tdim}\) if and only if the
  coefficient of \(pG\) of degree \(2\tdim-1-k\) is zero for \(0 \le k <
  \tdim\). Since \(\gamma \le \tdim \le 2\tdim-1-k\), this coefficient is
  \[
    \mathrm{Coeff}(pG,2\tdim-1-k)
    = \sum_{i=0}^{\gamma} p_i S_{2e-1-(2e-1-k-i)}
    = \sum_{i=0}^{\gamma} p_i S_{k+i}
    = 0.
  \]
  Thus we have proved \(\mathcal{P}_{\seq,\tdim} = \kerhk{\seq,\tdim}\), and
  \cref{lem:hankel_kernel} shows the claims in this lemma except the last one.
  Let \(p \in \mathcal{P}_{\seq,\tdim}\) and define \(q\) as the polynomial in
  \(\yAlg^{\sdim}_{<\tdim}\) such that \(pG = q \bmod y^{2\tdim}\). Since
  \(p\in\ann{\seq}\), \cref{lem:series_characterization} shows that \(p\gens\) is a
  polynomial. On the other hand the definitions of \(G\) and \(\gens\) yield \(pG =
  y^{2\tdim}p\gens - p\sum_{j\ge 2\tdim} \seqe{j} y^{2\tdim-1-j}\). Hence \(-
  p\sum_{j\ge 2\tdim} \seqe{j} y^{2\tdim-1-j}\) is a polynomial, and since it
  has degree less than \(\gamma\), and thus in particular less than \(2\tdim\),
  it is equal to \(q\).
\end{proof}

From \(G\), define \(F \in \abmats{1}{\sdim}\) of bi-degree less than
\((\prc,2\tdim)\) via the morphism \(\varphi\) from
\cref{sec:preliminaries:bivariate}. Equip \(\abRing\) with the lexicographic
order \(\ordLex\), and let \(\ord\) be the corresponding term over position
order on \(\abRing^{\sdim+1}\).
Then a minimal \(\ord\)-Gr\"obner basis of the submodule of simultaneous Pad\'e
approximants
\[
  \{ (p,q) \in \abRing\times\abmats{1}{\sdim} \mid pF = q \bmod (x^\prc,y^{2\tdim}) \}
\]
is computed in \(\softO{(\sdim^\expmm\min(\prc,\tdim)^\expmm +
\sdim^3\min(\prc,\tdim)^2) \prc \tdim}\) operations, using the algorithm of
\cite{NaldiNeiger2020} (see also \cref{sec:preliminaries:tools}) with input
matrix of size \((\sdim+1) \times \sdim\) formed by stacking the identity
\(I_\sdim\) below \(F\). \cref{lem:series_characterization_truncated} shows
that from this \(\ord\)-Gr\"obner basis one can find a minimal
\(\ordLex\)-Gr\"obner basis of \(\bar\varphi(\ann{\seq})\) by selecting \(p\)
for each \((p,q)\) in the basis such that
\(\deg_{\beta}(q) < \deg_{\beta}(p)\).

While the \Call{PM-Basis}{} approach had cost quasi-linear in \(\prc\) and
\(\sdim\), the method here is most efficient in an opposite parameter range:
for \(\sdim\in\bigO{1}\) and \(\prc\le\tdim\) the above cost bound becomes
\(\softO{\prc^{\expmm+1} e}\).

\section{Experimental Results }
\label{sec:experiment}

In this section, we compare timings for the algorithms in
\cref{sec:kurakin,sec:modified_kurakin,sec:pmbasis}, implemented in C++ using
the libraries NTL \cite{NTL} and PML \cite{PML} which provide high-performance
support for univariate polynomials and polynomial matrices.  We leave the
implementation of the bivariate algorithm of \cref{sec:bivariate_pade} as
future work.
To control the cardinality and shape of the Gr\"obner basis, we
use Lazard's structural theorem (see \cref{sec:preliminaries:bivariate}).
The shape of the monomial staircase is randomized with maximal $\beta$-degree
$\order$ and $\alpha^d$ included in the basis.  After generating a random
Gr\"obner basis $\mathcal{G}$ of target degree and size, we use it to generate $n$
sequences (with $e = 2\delta$ terms), using random initial conditions. Finally, we
compute the annihilator of the sequence,
which may not necessarily recover $\mathcal{G}$ itself (see \cref{sec:minpoly}).
Runtimes are showed below.

\noindent\resizebox{\columnwidth}{!}{
  \begin{tabular}{|c|c|c||c|c||c|c|c||c|c|}
  \hline
  $n$ & $d$ & $\order$ & $d_{\mathrm{opt}}$ & $D/d\order$ & K & LK & $d^*$ & PM-B & HPM\\
  \hline
  1  & 64  & 256 & 1   &  1         & 62.8 & 0.93 & 1  & 1.06 & NA \\
  %1  & 64  & 256 & 39  &  0.43      & 26.8 & 1.31 & 42 & 2.14 & NA \\
  1  & 64  & 256 & 49  &  0.62      & 38.0 & 1.65 & 53 & 2.10 & NA \\ 
  1  & 128 & 512 & 16  &  0.92      & >100 & 12   & 17 & 20.5 & NA \\
  1  & 128 & 32  & 12  &  0.91      & 7.85 & 0.078 & 12 & 0.029 & NA \\
  1  & 256 & 32  & 14  &  0.94      & 27.3 & 0.12 & 14 & 0.08 & NA \\
  1  & 256 & 128 & 27  &  0.92      & >100 & 1.28 & 27 & 1.60 & NA \\
  1  & 512 & 256 & 29  &  0.96      & >100 & 8.65 & 29 & 27.8 & NA\\ 
  2  & 17  & 256 & 2   &  0.5       & 14.1 & 0.91 & 2 & 0.33 & 0.29 \\ 
  3  & 12  & 512 & 4   &  0.4       & 6.93 & 1.40 & 4 & 2.47 & 1.86 \\
  8  & 16  & 256 & 1   &  1         & 54.1 & 3.16 & 1  & 0.56 & 0.25 \\
  32 & 16  & 256 & 1   &  1         & >100 & 39.8 & 1 & 2.79 & 0.35\\
  64 & 16  & 128 & 1   &  1         & >100 & >100 & 1 & 1.02 & 0.13\\
  \hline
  \end{tabular}
}
\textbf{Table:} \emph{Runtimes, in seconds, of algorithms Kurakin, Lazy
Kurakin, direct \textsc{PM-Basis}, and \textsc{Hankel-PM-Basis}, observed on
AMD Ryzen 5 3600X 6-Core CPU with 16 GB RAM.  The base field is \(\mathbb{K} =
\mathbb{F}_{9001}\).}

\smallskip

As we claim in \cref{sec:modified_kurakin}, $d^*$ is often close or equal to $d_{\mathrm{opt}}$. More interestingly,
Lazy Kurakin outperforms Kurakin more than $d / d^*$ would suggest. For example, for $\delta = 256, d = 64, d_{\mathrm{opt}} = 49$,
then $d/d^* \approx 1.2$ but Kurakin is 23 times slower than Lazy Kurakin. 
This is because the cost bound $\softO{ed^*(n^2 e d + n^\omega d)}$
for Lazy Kurakin assumes that $d^*$ polynomials are tracked from the beginning of the algorithms. However, due to its lazy nature, polynomials are
often added later in the algorithm and the bound of $ed^*$ subiterations may significantly overestimate the true number of subiterations.

When $\order, d, n$ are fixed, Kurakin's algorithm performs worse for $d_{\mathrm{opt}} = 1$ than $d_{\mathrm{opt}} > 1$, although
this is a favourable case for Lazy Kurakin. In this case, Kurakin's algorithm computes $P_i = x^i P_0$ so there cannot be any early termination.
Additionally, the size of the staircase is maximal ($D = ed$), so this is also the worst case for algorithms whose complexity depends directly on $D$. 
Lazy Kurakin's algorithm somewhat remedies this by using the extra structure of $\alg$ and adding monomials
in a lazy fashion.
(When it is known that $\ann{\seq} = \langle P \rangle$, it is possible to design an algorithm that is quasilinear in $e$ via structured system solving,
see \cref{sec:app:det}).

For scalar sequences over $\alg$, i.e.~$n=1$, Lazy Kurakin's algorithm seems to be the best choice when $\delta$ is large
compared to $d$, whereas PM-Basis seems to be the best choice in the converse. When $e = 2\delta = d$, Lazy Kurakin outperforms
PM-Basis, given that $d^*$ is small. This is predicted by the theoretical complexities, as the former has complexity
$\softO{e^3 d^*}$, while the latter has complexity $\softO{e^{\omega+1}}$.

For $n > 1$, \textsc{PM-Basis} and \textsc{Hankel-PM-Basis} clearly outperform Kurakin and Lazy Kurakin. 
This is as predicted since the complexity of the former
depends linearly on $n$, while the latter has a factor $n^\omega$. 
The theoretical improvement of \textsc{Hankel-PM-Basis} over \textsc{PM-Basis} is
observed empirically, especially for the two cases of $n = 32, 64$.

\section{Applications to sparse matrices} \label{sec:applications}

In this section, we outline two applications to sparse matrices $A \in
\alg^{n\times n}$: first, the computation of minimal polynomials of \(A\),
which are polynomials of minimal degree that cancel the matrix sequence $\seq_A
= (A^0, A^1, A^2,\ldots)$; second, the computation of the determinant of \(A\).
In what follows, we assume $A$ has sparsity $\bigO{n}$, i.e.~it has
\(\bigO{n}\) nonzero entries, and that the representation of \(A\) allows us to
compute matrix-vector products at cost $\softO{nd}$. Our approach is based on
Wiedemann's \cite{wiedemann}, designed for matrices over fields.

\subsection{Minimal polynomials of sparse matrices}
\label{sec:minpoly}

Given a matrix $A$, the well-known Cayley-Hamilton theorem states that $A$ cancels its own characteristic polynomial.
This implies that the sequence of successive powers of $A$ is linearly recurrent, and a polynomial of minimal degree that cancels this sequence
is said to be a minimal polynomial of $A$.
A different view one can take is that such canceling polynomials must
cancel the $n^2$ linearly recurrent sequences $((A^i)_{j_1,j_2})_{i\ge 0}$ simultaneously for $1 \le j_1,j_2 \le n$.
Then, as usual, we want to compute a Gr\"obner basis of the ideal of these canceling polynomials, denoted by $\ann{A}$.

Over $\alg$, trying to deduce $\ann{A}$ from $\ann{ (u^T A^i v)_{i\ge 0}}$, for random vectors $u,v \in \alg^{n \times 1}$, presents
a problem when $\ann{A}$ does not have the \emph{Gorenstein} property \cite{macaulay,grobner35}. When $\ann{A}$ has the Gorenstein property,
it has been showed that $\ann{A}$ can be recovered, with high probability, 
by using a bidimensional sequence with random initial conditions, provided $\field$ has large characteristic \cite{berthomieu17}. When it does not have the property,
$\ann{A}$ is still recoverable with a similar approach, but using several sequences \cite{NeHaSc17}.
Over various commutative rings, the problem of computing minimal polynomials of a matrix have been studied in
\cite{brown,heuberger,rissner}. However, the algorithms given in these works do not exploit sparsity. 

Given matrix $A$ as above, we start by choosing random $u_1,v \in \alg^n$ and generating $\seq_{A,1} = (u_1^T A^i v)_{0 \le i < 2n}$. 
Next, we apply one of the algorithms in the previous sections to compute
$\ann{\seq_{A,1}}$. If $\ann{\seq_{A,1}} = \ann{A}$, which can be checked probabilistically by checking if $\ann{\seq_{A,1}}$ also cancels
some validation sequence $((u')^T A^i v)_{0 \le i < 2n}$, we terminate the process. Otherwise, we double the number of sequences by doubling the
number of random $u_i$'s and generating $\seq_{A,1},\ldots, \seq_{A,2^s}$. The cost of the process is 
$\softO{\tau n^2 d + \mathcal{L}(n, d, \tau)}$, where $\tau$ is the number of sequences used and $\mathcal{L}(n, d, \tau)$ is the cost of
finding the annihilators of a partial sequence of length $n$ in $(\xRing/\genby{x^\prc})^\tau$. Note that this process must terminate. 
The crudest bound is when $\tau > n^2$ 
since then we could simply compute $\ann{A}$ directly. Another slightly more refined bound for the number of generic linear forms needed is 
$\tau \le D$, where $D$ is the size of the staircase of $\ann{A}$ \cite[Prop.\,1]{NeHaSc17}.

\subsection{Determinant of sparse matrices}
\label{sec:app:det}

The determinant of a matrix is easily obtained from its minimal polynomial when
the latter is equal to the characteristic polynomial. Wiedemann
\cite{wiedemann} calls such matrices \emph{nonderogatory} and shows that
preconditioning any matrix $B \in \field^{n\times n}$ with a random diagonal
matrix $D$ results in a nonderogatory matrix with high probability. We will
show that the same preconditioning can be applied to matrices over $\alg$.
Here, a particular role will be played by sequences $\seq \in \seqSetExpand$
such that $\ann{\seq} = \langle P \rangle$, for some monic $P \in \yAlg$.
Indeed, the next theorem shows that it is sufficient for the constant part of
$A$ to be nonderogatory in $\field$ for $A$ to be nonderogatory in $\alg$ and
for the sequence of its powers to satisfy this property.

\begin{theorem}
  Let $A_0 \in \mats{n}{n}$ be the constant part of $A$ (i.e.~for \(x=0\)). If
  $A_0$ is nonderogatory, then $\ann{A} = \langle P \rangle$ for some monic
  $P\in \yAlg$ of degree $n$.
\end{theorem}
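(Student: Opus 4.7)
The plan is to show that $\ann{A} = \genby{\chi_A}$, where $\chi_A \in \yAlg$ is the characteristic polynomial of $A$. By the Cayley--Hamilton theorem, which is valid over the commutative ring $\alg$, the polynomial $\chi_A$ is monic of degree $n$ and satisfies $\chi_A(A) = 0$, hence lies in $\ann{A}$. Note that reducing modulo $x$ turns $\chi_A$ into the characteristic polynomial $\chi_{A_0}$ of $A_0$, which under the nonderogatory hypothesis coincides with the minimal polynomial of $A_0$ and thus has degree exactly $n$.

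For the reverse inclusion $\ann{A} \subseteq \genby{\chi_A}$, I would take any $Q \in \ann{A}$ and use the Euclidean division of $Q$ by the monic polynomial $\chi_A$ in $\yAlg$, writing $Q = \chi_A S + R$ with $\deg_y(R) < n$; since $\chi_A \in \ann{A}$, the remainder $R$ also lies in $\ann{A}$. It therefore suffices to prove that any $R \in \ann{A}$ of $y$-degree less than $n$ must be zero.

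To prove this vanishing, I would lift $R$ to $\xRing[y]$ with every coefficient of $x$-degree less than $\prc$, and perform an induction on the $x$-adic valuation of its coefficients. Reducing the relation $R(A) \equiv 0 \pmod{x^\prc}$ modulo $x$ yields a polynomial $\bar R \in \field[y]$ of degree less than $n$ with $\bar R(A_0) = 0$; since $A_0$ is nonderogatory, no nonzero such polynomial exists, hence $\bar R = 0$ and $R = x R^{(1)}$ in $\xRing[y]$. The identity $x R^{(1)}(A) \equiv 0 \pmod{x^\prc}$ then implies $R^{(1)}(A) \equiv 0 \pmod{x^{\prc-1}}$, and the same argument applied to $R^{(1)}$ gives $R^{(1)} = x R^{(2)}$. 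Iterating $\prc$ times, $x^\prc$ divides $R$ in $\xRing[y]$, so $R$ vanishes in $\yAlg$.

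The main obstacle is the careful bookkeeping in this descent: each reduction step relaxes the modulus from $x^\prc$ to a smaller power $x^{\prc-k}$, so one must check that the nonderogatory hypothesis on $A_0$ can still be invoked at every stage (which it can, since reducing further mod $x$ requires only $\prc - k \ge 1$). Taking $P = \chi_A$ then yields the desired principal generator of degree $n$.
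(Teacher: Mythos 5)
Your proof is correct, but it takes a genuinely different route from the paper's. The paper lets $P$ be a minimal-degree monic canceling polynomial of the sequence $\seq_A=(A^0,A^1,\ldots)$, asserts in one line that nonderogatority of $A_0$ forces every nonzero canceling polynomial to have degree at least $n$, deduces that the minimal canceling polynomial with leading coefficient $x^i$ must be $x^iP$ (else the difference would cancel $A$ in degree below $n$), and then invokes \cref{theorem:kurakin} to conclude $\ann{A}=\langle P, xP,\ldots,x^{\prc-1}P\rangle=\langle P\rangle$. You bypass \cref{theorem:kurakin} entirely: you exhibit the generator explicitly as $\chi_A$ via Cayley--Hamilton over the commutative ring $\alg$, reduce an arbitrary element of $\ann{A}$ by Euclidean division by this monic polynomial (valid over any commutative ring), and kill the remainder by an $x$-adic descent. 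The two arguments share the same crux --- no nonzero polynomial of $y$-degree less than $n$ cancels $A$ --- and your descent (factor out $x$, relax the modulus from $x^{\prc}$ to $x^{\prc-k}$, reduce mod $x$, apply nonderogatority of $A_0$, noting $\prc-k\ge 1$ at each stage) is precisely the justification the paper leaves implicit for its one-line claim. What your route buys is self-containedness and an explicit generator $P=\chi_A$; what the paper's buys is brevity, since \cref{theorem:kurakin} already supplies a generating set to compare against. A minor remark: your observation that $\chi_A$ reduces mod $x$ to $\chi_{A_0}$ is not actually needed, since $\deg\chi_A=n$ holds unconditionally.
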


\begin{proof}
  Let $P\in\yAlg$ be the minimal monic polynomial of the sequence $\seq_A =
  (A^0, A^1, A^2, \ldots)$, then $\deg(P) \le n$ since $A$ is $n\times n$.
  Now, $A_0$ is nonderogatory, so any canceling polynomial must have
  degree $\ge n$; thus, $\deg(P) = n$. Furthermore, if there
  exists another polynomial $Q$ of degree $n$ and leading coefficient $x^i$
  such that $Q \ne x^i P$, then $Q - x^i P$ is a canceling polynomial of degree
  less than $n$, contradicting the previous statement.  Thus, $P, xP, \dots,
  x^{d-1} P$ are minimal in degree and, by \cref{theorem:kurakin}, $\ann{A} =
  \langle P, xP, \dots, x^{d-1}P \rangle = \langle P \rangle$.
\end{proof}

The above theorem allows us to use the same preconditioner as in
\cite{wiedemann}: a random constant diagonal matrix $D$. The preconditioning
ensures that the ideal of canceling polynomial is generated by a single monic
polynomial; thus, $\bar\varphi(\ann{AD})$ is Gorenstein and requires only a
single linear form to be recovered. Furthermore, when it is known that the
ideal is generated by a single polynomial, we can recover this polynomial in
$\softO{nd}$ by taking advantage of the fact that the constant part of the
leading $n \times n$ submatrix of $\hk{\seq,2n}$ is an invertible Hankel matrix
\cite{bostan}. Once we have $P$, we can compute $\det(A) = P(0) (\prod_i
D_{i,i})^{-1}$.
Under our sparsity assumption, the cost of this method is $\softO{n^2d}$
for computing $(u^{T} A^i v)_{i \le 2n}$, $\softO{nd}$ for computing $P$,
and $\softO{n+d}$ for recovering the determinant from $P$, leading to the
total cost of $\softO{n^2 d}$ operations in \(\field\). This is to be
compared with computing the determinant of \(A\) ``at full precision'', i.e.~by
seeing \(A\) as a matrix over \(\xRing\), and then truncating the result modulo
\(x^\prc\): this costs $\softO{n^\expmm d}$ operations in \(\field\)
\cite{LabahnNeigerZhou2017}.

%%% -*-BibTeX-*-
%%% Do NOT edit. File created by BibTeX with style
%%% ACM-Reference-Format-Journals [18-Jan-2012].


\begin{thebibliography}{42}

%%% ====================================================================
%%% NOTE TO THE USER: you can override these defaults by providing
%%% customized versions of any of these macros before the \bibliography
%%% command.  Each of them MUST provide its own final punctuation,
%%% except for \shownote{}, \showDOI{}, and \showURL{}.  The latter two
%%% do not use final punctuation, in order to avoid confusing it with
%%% the Web address.
%%%
%%% To suppress output of a particular field, define its macro to expand
%%% to an empty string, or better, \unskip, like this:
%%%
%%% \newcommand{\showDOI}[1]{\unskip}   % LaTeX syntax
%%%
%%% \def \showDOI #1{\unskip}           % plain TeX syntax
%%%
%%% ====================================================================

\ifx \showCODEN    \undefined \def \showCODEN     #1{\unskip}     \fi
\ifx \showDOI      \undefined \def \showDOI       #1{#1}\fi
\ifx \showISBNx    \undefined \def \showISBNx     #1{\unskip}     \fi
\ifx \showISBNxiii \undefined \def \showISBNxiii  #1{\unskip}     \fi
\ifx \showISSN     \undefined \def \showISSN      #1{\unskip}     \fi
\ifx \showLCCN     \undefined \def \showLCCN      #1{\unskip}     \fi
\ifx \shownote     \undefined \def \shownote      #1{#1}          \fi
\ifx \showarticletitle \undefined \def \showarticletitle #1{#1}   \fi
\ifx \showURL      \undefined \def \showURL       {\relax}        \fi
% The following commands are used for tagged output and should be
% invisible to TeX
\providecommand\bibfield[2]{#2}
\providecommand\bibinfo[2]{#2}
\providecommand\natexlab[1]{#1}
\providecommand\showeprint[2][]{arXiv:#2}

\bibitem[\protect\citeauthoryear{Alman and Williams}{Alman and
  Williams}{2021}]%
        {AlmanWilliams2021}
\bibfield{author}{\bibinfo{person}{J. Alman} {and}
  \bibinfo{person}{V.~Vassilevska Williams}.} \bibinfo{year}{2021}\natexlab{}.
\newblock \showarticletitle{A Refined Laser Method and Faster Matrix
  Multiplication}. In \bibinfo{booktitle}{\emph{Proceedings SODA 2021}}.
  \bibinfo{pages}{522--539}.
\newblock
\urldef\tempurl%
\url{https://doi.org/10.1137/1.9781611976465.32}
\showDOI{\tempurl}


\bibitem[\protect\citeauthoryear{Beckermann and Labahn}{Beckermann and
  Labahn}{1994}]%
        {BecLab94}
\bibfield{author}{\bibinfo{person}{B. Beckermann} {and} \bibinfo{person}{G.
  Labahn}.} \bibinfo{year}{1994}\natexlab{}.
\newblock \showarticletitle{A Uniform Approach for the Fast Computation of
  Matrix-Type {P}ad\'e Approximants}.
\newblock \bibinfo{journal}{\emph{SIAM J. Matrix Anal. Appl.}}
  \bibinfo{volume}{15}, \bibinfo{number}{3} (\bibinfo{year}{1994}),
  \bibinfo{pages}{804--823}.
\newblock
\showISSN{0895-4798}
\urldef\tempurl%
\url{https://doi.org/10.1137/S0895479892230031}
\showDOI{\tempurl}


\bibitem[\protect\citeauthoryear{Beckermann, Labahn, and Villard}{Beckermann
  et~al\mbox{.}}{1999}]%
        {BeLaVi99}
\bibfield{author}{\bibinfo{person}{B. Beckermann}, \bibinfo{person}{G. Labahn},
  {and} \bibinfo{person}{G. Villard}.} \bibinfo{year}{1999}\natexlab{}.
\newblock \showarticletitle{Shifted Normal Forms of Polynomial Matrices}. In
  \bibinfo{booktitle}{\emph{ISSAC'99}}. \bibinfo{publisher}{ACM},
  \bibinfo{pages}{189--196}.
\newblock
\showISBNx{1-58113-073-2}
\urldef\tempurl%
\url{https://doi.org/10.1145/309831.309929}
\showDOI{\tempurl}


\bibitem[\protect\citeauthoryear{{Berlekamp}}{{Berlekamp}}{1968}]%
        {berlekamp68}
\bibfield{author}{\bibinfo{person}{E. {Berlekamp}}.}
  \bibinfo{year}{1968}\natexlab{}.
\newblock \showarticletitle{Nonbinary BCH decoding (Abstr.)}.
\newblock \bibinfo{journal}{\emph{IEEE Trans. Inf. Theory}}
  \bibinfo{volume}{14}, \bibinfo{number}{2} (\bibinfo{year}{1968}),
  \bibinfo{pages}{242--242}.
\newblock
\urldef\tempurl%
\url{https://doi.org/10.1109/TIT.1968.1054109}
\showDOI{\tempurl}


\bibitem[\protect\citeauthoryear{Berthomieu, Boyer, and Faug{\`e}re}{Berthomieu
  et~al\mbox{.}}{2017}]%
        {berthomieu17}
\bibfield{author}{\bibinfo{person}{J. Berthomieu}, \bibinfo{person}{B. Boyer},
  {and} \bibinfo{person}{J.-C. Faug{\`e}re}.} \bibinfo{year}{2017}\natexlab{}.
\newblock \showarticletitle{Linear algebra for computing {G}r{\"o}bner bases of
  linear recursive multidimensional sequences}.
\newblock \bibinfo{journal}{\emph{J. Symb. Comput.}}  \bibinfo{volume}{83}
  (\bibinfo{year}{2017}), \bibinfo{pages}{36--67}.
\newblock
\showISSN{0747-7171}
\urldef\tempurl%
\url{https://doi.org/10.1016/j.jsc.2016.11.005}
\showDOI{\tempurl}


\bibitem[\protect\citeauthoryear{Berthomieu and Faug\`{e}re}{Berthomieu and
  Faug\`{e}re}{2018}]%
        {BerthomieuFaugere2018}
\bibfield{author}{\bibinfo{person}{J. Berthomieu} {and} \bibinfo{person}{J.-C.
  Faug\`{e}re}.} \bibinfo{year}{2018}\natexlab{}.
\newblock \showarticletitle{A Polynomial-Division-Based Algorithm for Computing
  Linear Recurrence Relations}. In \bibinfo{booktitle}{\emph{ISSAC'18}}.
  \bibinfo{pages}{79--86}.
\newblock
\urldef\tempurl%
\url{https://doi.org/10.1145/3208976.3209017}
\showDOI{\tempurl}


\bibitem[\protect\citeauthoryear{Bostan, Jeannerod, and Schost}{Bostan
  et~al\mbox{.}}{2008}]%
        {bostan}
\bibfield{author}{\bibinfo{person}{A. Bostan}, \bibinfo{person}{C.-P.
  Jeannerod}, {and} \bibinfo{person}{{\'E}. Schost}.}
  \bibinfo{year}{2008}\natexlab{}.
\newblock \showarticletitle{Solving structured linear systems with large
  displacement rank}.
\newblock \bibinfo{journal}{\emph{Theor. Comput. Sci.}} \bibinfo{volume}{407},
  \bibinfo{number}{1} (\bibinfo{year}{2008}), \bibinfo{pages}{155--181}.
\newblock
\showISSN{0304-3975}
\urldef\tempurl%
\url{https://doi.org/10.1016/j.tcs.2008.05.014}
\showDOI{\tempurl}


\bibitem[\protect\citeauthoryear{Brown}{Brown}{2005}]%
        {brown}
\bibfield{author}{\bibinfo{person}{W.~C. Brown}.}
  \bibinfo{year}{2005}\natexlab{}.
\newblock \showarticletitle{Null Ideals of Matrices}.
\newblock \bibinfo{journal}{\emph{Communications in Algebra}}
  \bibinfo{volume}{33}, \bibinfo{number}{12} (\bibinfo{year}{2005}),
  \bibinfo{pages}{4491--4504}.
\newblock
\urldef\tempurl%
\url{https://doi.org/10.1080/00927870500274820}
\showDOI{\tempurl}


\bibitem[\protect\citeauthoryear{Coppersmith and Winograd}{Coppersmith and
  Winograd}{1990}]%
        {CopWin90}
\bibfield{author}{\bibinfo{person}{D. Coppersmith} {and} \bibinfo{person}{S.
  Winograd}.} \bibinfo{year}{1990}\natexlab{}.
\newblock \showarticletitle{Matrix multiplication via arithmetic progressions}.
\newblock \bibinfo{journal}{\emph{J. Symb. Comput.}} \bibinfo{volume}{9},
  \bibinfo{number}{3} (\bibinfo{year}{1990}), \bibinfo{pages}{251--280}.
\newblock
\showISSN{0747-7171}
\urldef\tempurl%
\url{https://doi.org/10.1016/S0747-7171(08)80013-2}
\showDOI{\tempurl}


\bibitem[\protect\citeauthoryear{Cox, Little, and O'Shea}{Cox
  et~al\mbox{.}}{2005}]%
        {CoLiOSh05}
\bibfield{author}{\bibinfo{person}{D.~A. Cox}, \bibinfo{person}{J. Little},
  {and} \bibinfo{person}{D. O'Shea}.} \bibinfo{year}{2005}\natexlab{}.
\newblock \bibinfo{booktitle}{\emph{Using Algebraic Geometry (second
  edition)}}.
\newblock \bibinfo{publisher}{Springer-Verlag New-York}, \bibinfo{address}{New
  York, NY}.
\newblock
\urldef\tempurl%
\url{https://doi.org/10.1007/b138611}
\showDOI{\tempurl}


\bibitem[\protect\citeauthoryear{DeMillo and Lipton}{DeMillo and
  Lipton}{1978}]%
        {DeMilloLipton78}
\bibfield{author}{\bibinfo{person}{R.~A. DeMillo} {and} \bibinfo{person}{R.~J.
  Lipton}.} \bibinfo{year}{1978}\natexlab{}.
\newblock \showarticletitle{A Probabilistic Remark on Algebraic Program
  Testing}.
\newblock \bibinfo{journal}{\emph{Inform. Process. Lett.}} \bibinfo{volume}{7},
  \bibinfo{number}{4} (\bibinfo{year}{1978}), \bibinfo{pages}{193--195}.
\newblock


\bibitem[\protect\citeauthoryear{Fitzpatrick}{Fitzpatrick}{1997}]%
        {Fitzpatrick97}
\bibfield{author}{\bibinfo{person}{P. Fitzpatrick}.}
  \bibinfo{year}{1997}\natexlab{}.
\newblock \showarticletitle{Solving a {M}ultivariable {C}ongruence by {C}hange
  of {T}erm {O}rder}.
\newblock \bibinfo{journal}{\emph{J. Symb. Comput.}} \bibinfo{volume}{24},
  \bibinfo{number}{5} (\bibinfo{year}{1997}), \bibinfo{pages}{575--589}.
\newblock
\showISSN{0747-7171}
\urldef\tempurl%
\url{https://doi.org/10.1006/jsco.1997.0153}
\showDOI{\tempurl}


\bibitem[\protect\citeauthoryear{{Fitzpatrick} and {Norton}}{{Fitzpatrick} and
  {Norton}}{1990}]%
        {FitzpatrickNorton90}
\bibfield{author}{\bibinfo{person}{P. {Fitzpatrick}} {and}
  \bibinfo{person}{G.~H. {Norton}}.} \bibinfo{year}{1990}\natexlab{}.
\newblock \showarticletitle{Finding a basis for the characteristic ideal of an
  $n$-dimensional linear recurring sequence}.
\newblock \bibinfo{journal}{\emph{IEEE Trans. Inf. Theory}}
  \bibinfo{volume}{36}, \bibinfo{number}{6} (\bibinfo{year}{1990}),
  \bibinfo{pages}{1480--1487}.
\newblock
\urldef\tempurl%
\url{https://doi.org/10.1109/18.59953}
\showDOI{\tempurl}


\bibitem[\protect\citeauthoryear{Giorgi, Jeannerod, and Villard}{Giorgi
  et~al\mbox{.}}{2003}]%
        {GiJeVi03}
\bibfield{author}{\bibinfo{person}{P. Giorgi}, \bibinfo{person}{C.-P.
  Jeannerod}, {and} \bibinfo{person}{G. Villard}.}
  \bibinfo{year}{2003}\natexlab{}.
\newblock \showarticletitle{On the complexity of polynomial matrix
  computations}. In \bibinfo{booktitle}{\emph{ISSAC'03}}.
  \bibinfo{publisher}{ACM}, \bibinfo{pages}{135--142}.
\newblock
\showISBNx{1-58113-641-2}
\urldef\tempurl%
\url{https://doi.org/10.1145/860854.860889}
\showDOI{\tempurl}


\bibitem[\protect\citeauthoryear{Giorgi and Neiger}{Giorgi and Neiger}{2018}]%
        {GiorgiNeiger2018}
\bibfield{author}{\bibinfo{person}{P. Giorgi} {and} \bibinfo{person}{V.
  Neiger}.} \bibinfo{year}{2018}\natexlab{}.
\newblock \showarticletitle{Certification of Minimal Approximant Bases}. In
  \bibinfo{booktitle}{\emph{ISSAC'18}}. \bibinfo{publisher}{ACM},
  \bibinfo{pages}{167--174}.
\newblock
\urldef\tempurl%
\url{https://doi.org/10.1145/3208976.3208991}
\showDOI{\tempurl}


\bibitem[\protect\citeauthoryear{Gr{\"o}bner}{Gr{\"o}bner}{1935}]%
        {grobner35}
\bibfield{author}{\bibinfo{person}{W. Gr{\"o}bner}.}
  \bibinfo{year}{1935}\natexlab{}.
\newblock \showarticletitle{{\"U}ber irreduzible Ideale in kommutativen
  Ringen}.
\newblock \bibinfo{journal}{\emph{Math. Ann.}} \bibinfo{volume}{110},
  \bibinfo{number}{1} (\bibinfo{year}{1935}), \bibinfo{pages}{197--222}.
\newblock


\bibitem[\protect\citeauthoryear{Heuberger and Rissner}{Heuberger and
  Rissner}{2017}]%
        {heuberger}
\bibfield{author}{\bibinfo{person}{C. Heuberger} {and} \bibinfo{person}{R.
  Rissner}.} \bibinfo{year}{2017}\natexlab{}.
\newblock \showarticletitle{Computing J-ideals of a matrix over a principal
  ideal domain}.
\newblock \bibinfo{journal}{\emph{Linear Algebra Appl.}}  \bibinfo{volume}{527}
  (\bibinfo{year}{2017}), \bibinfo{pages}{12--31}.
\newblock
\showISSN{0024-3795}
\urldef\tempurl%
\url{https://doi.org/10.1016/j.laa.2017.03.028}
\showDOI{\tempurl}


\bibitem[\protect\citeauthoryear{Hyun, Neiger, and Schost}{Hyun
  et~al\mbox{.}}{2019}]%
        {PML}
\bibfield{author}{\bibinfo{person}{S.~G. Hyun}, \bibinfo{person}{V. Neiger},
  {and} \bibinfo{person}{\'{E}. Schost}.} \bibinfo{year}{2019}\natexlab{}.
\newblock \showarticletitle{Implementations of Efficient Univariate Polynomial
  Matrix Algorithms and Application to Bivariate Resultants}. In
  \bibinfo{booktitle}{\emph{ISSAC'19}}. \bibinfo{publisher}{ACM},
  \bibinfo{pages}{235--242}.
\newblock
\showISBNx{9781450360845}
\urldef\tempurl%
\url{https://doi.org/10.1145/3326229.3326272}
\showDOI{\tempurl}


\bibitem[\protect\citeauthoryear{Jeannerod, Neiger, and Villard}{Jeannerod
  et~al\mbox{.}}{2020}]%
        {JeaNeiVil2020}
\bibfield{author}{\bibinfo{person}{C.-P. Jeannerod}, \bibinfo{person}{V.
  Neiger}, {and} \bibinfo{person}{G. Villard}.}
  \bibinfo{year}{2020}\natexlab{}.
\newblock \showarticletitle{Fast computation of approximant bases in canonical
  form}.
\newblock \bibinfo{journal}{\emph{J. Symb. Comput.}}  \bibinfo{volume}{98}
  (\bibinfo{year}{2020}), \bibinfo{pages}{192--224}.
\newblock
\showISSN{0747-7171}
\urldef\tempurl%
\url{https://doi.org/10.1016/j.jsc.2019.07.011}
\showDOI{\tempurl}


\bibitem[\protect\citeauthoryear{Kurakin}{Kurakin}{1998}]%
        {kurakin98}
\bibfield{author}{\bibinfo{person}{V.~L. Kurakin}.}
  \bibinfo{year}{1998}\natexlab{}.
\newblock \showarticletitle{The Berlekamp--Massey algorithm over finite rings,
  modules, and bimodules}.
\newblock \bibinfo{journal}{\emph{Discrete Mathematics and Applications}}
  \bibinfo{volume}{8}, \bibinfo{number}{5} (\bibinfo{year}{1998}),
  \bibinfo{pages}{441--474}.
\newblock


\bibitem[\protect\citeauthoryear{Kurakin}{Kurakin}{2000}]%
        {Kurakin2000FPSAC}
\bibfield{author}{\bibinfo{person}{V.~L. Kurakin}.}
  \bibinfo{year}{2000}\natexlab{}.
\newblock \showarticletitle{Construction of the Annihilator of a Linear
  Recurring Sequence over Finite Module with the help of the Berlekamp-Massey
  Algorithm}. In \bibinfo{booktitle}{\emph{FPSAC 2000}}.
  \bibinfo{publisher}{Springer}, \bibinfo{pages}{476--483}.
\newblock
\urldef\tempurl%
\url{https://doi.org/10.1007/978-3-662-04166-6_45}
\showDOI{\tempurl}


\bibitem[\protect\citeauthoryear{Labahn, Neiger, and Zhou}{Labahn
  et~al\mbox{.}}{2017}]%
        {LabahnNeigerZhou2017}
\bibfield{author}{\bibinfo{person}{G. Labahn}, \bibinfo{person}{V. Neiger},
  {and} \bibinfo{person}{W. Zhou}.} \bibinfo{year}{2017}\natexlab{}.
\newblock \showarticletitle{Fast, deterministic computation of the {H}ermite
  normal form and determinant of a polynomial matrix}.
\newblock   \bibinfo{volume}{42} (\bibinfo{year}{2017}),
  \bibinfo{pages}{44--71}.
\newblock
\urldef\tempurl%
\url{https://doi.org/10.1016/j.jco.2017.03.003}
\showDOI{\tempurl}


\bibitem[\protect\citeauthoryear{Lazard}{Lazard}{1985}]%
        {lazard}
\bibfield{author}{\bibinfo{person}{D. Lazard}.}
  \bibinfo{year}{1985}\natexlab{}.
\newblock \showarticletitle{Ideal Bases and Primary Decomposition: Case of Two
  Variables}.
\newblock \bibinfo{journal}{\emph{J. Symb. Comput.}} \bibinfo{volume}{1},
  \bibinfo{number}{3} (\bibinfo{year}{1985}), \bibinfo{pages}{261--270}.
\newblock


\bibitem[\protect\citeauthoryear{Le~Gall}{Le~Gall}{2014}]%
        {LeGall14}
\bibfield{author}{\bibinfo{person}{F. Le~Gall}.}
  \bibinfo{year}{2014}\natexlab{}.
\newblock \showarticletitle{Powers of Tensors and Fast Matrix Multiplication}.
  In \bibinfo{booktitle}{\emph{ISSAC'14}} (Kobe, Japan).
  \bibinfo{publisher}{ACM}, \bibinfo{pages}{296--303}.
\newblock
\showISBNx{978-1-4503-2501-1}
\urldef\tempurl%
\url{https://doi.org/10.1145/2608628.2608664}
\showDOI{\tempurl}


\bibitem[\protect\citeauthoryear{Macaulay}{Macaulay}{1934}]%
        {macaulay}
\bibfield{author}{\bibinfo{person}{F.~S. Macaulay}.}
  \bibinfo{year}{1934}\natexlab{}.
\newblock \showarticletitle{Modern algebra and polynomial ideals}. In
  \bibinfo{booktitle}{\emph{Math. Proc. Camb. Philos. Soc}},
  Vol.~\bibinfo{volume}{30}. Cambridge University Press,
  \bibinfo{pages}{27--46}.
\newblock


\bibitem[\protect\citeauthoryear{Massey}{Massey}{1969}]%
        {massey69}
\bibfield{author}{\bibinfo{person}{J. Massey}.}
  \bibinfo{year}{1969}\natexlab{}.
\newblock \showarticletitle{Shift-register synthesis and BCH decoding}.
\newblock \bibinfo{journal}{\emph{IEEE Trans. Inf. Theory}}
  \bibinfo{volume}{15} (\bibinfo{year}{1969}), \bibinfo{pages}{122--127}.
\newblock


\bibitem[\protect\citeauthoryear{Mourrain}{Mourrain}{2017}]%
        {Mourrain2017}
\bibfield{author}{\bibinfo{person}{B. Mourrain}.}
  \bibinfo{year}{2017}\natexlab{}.
\newblock \showarticletitle{Fast Algorithm for Border Bases of {A}rtinian
  {G}orenstein Algebras}. In \bibinfo{booktitle}{\emph{ISSAC'17}}
  (Kaiserslautern, Germany). \bibinfo{publisher}{ACM},
  \bibinfo{pages}{333--340}.
\newblock
\urldef\tempurl%
\url{https://doi.org/10.1145/3087604.3087632}
\showDOI{\tempurl}


\bibitem[\protect\citeauthoryear{Naldi and Neiger}{Naldi and Neiger}{2020}]%
        {NaldiNeiger2020}
\bibfield{author}{\bibinfo{person}{S. Naldi} {and} \bibinfo{person}{V.
  Neiger}.} \bibinfo{year}{2020}\natexlab{}.
\newblock \showarticletitle{A Divide-and-Conquer Algorithm for Computing
  {G}r\"{o}bner Bases of Syzygies in Finite Dimension}. In
  \bibinfo{booktitle}{\emph{ISSAC'20}}. \bibinfo{publisher}{ACM},
  \bibinfo{pages}{380--387}.
\newblock
\showISBNx{9781450371001}
\urldef\tempurl%
\url{https://doi.org/10.1145/3373207.3404059}
\showDOI{\tempurl}


\bibitem[\protect\citeauthoryear{Neiger, Rahkooy, and Schost}{Neiger
  et~al\mbox{.}}{2017}]%
        {NeHaSc17}
\bibfield{author}{\bibinfo{person}{V. Neiger}, \bibinfo{person}{H. Rahkooy},
  {and} \bibinfo{person}{{\'E}. Schost}.} \bibinfo{year}{2017}\natexlab{}.
\newblock \showarticletitle{Algorithms for zero-dimensional ideals using linear
  recurrent sequences}. In \bibinfo{booktitle}{\emph{CASC 2017}}. Springer,
  \bibinfo{pages}{313--328}.
\newblock


\bibitem[\protect\citeauthoryear{Neiger and Schost}{Neiger and Schost}{2020}]%
        {NeigerSchost2020}
\bibfield{author}{\bibinfo{person}{V. Neiger} {and} \bibinfo{person}{{\'E}.
  Schost}.} \bibinfo{year}{2020}\natexlab{}.
\newblock \showarticletitle{Computing syzygies in finite dimension using fast
  linear algebra}.
\newblock \bibinfo{journal}{\emph{J. Complexity}}  \bibinfo{volume}{60}
  (\bibinfo{year}{2020}), \bibinfo{pages}{101502}.
\newblock
\showISSN{0885-064X}
\urldef\tempurl%
\url{https://doi.org/10.1016/j.jco.2020.101502}
\showDOI{\tempurl}


\bibitem[\protect\citeauthoryear{Popov}{Popov}{1972}]%
        {Popov72}
\bibfield{author}{\bibinfo{person}{V.~M. Popov}.}
  \bibinfo{year}{1972}\natexlab{}.
\newblock \showarticletitle{Invariant Description of Linear, Time-Invariant
  Controllable Systems}.
\newblock \bibinfo{journal}{\emph{SIAM Journal on Control}}
  \bibinfo{volume}{10}, \bibinfo{number}{2} (\bibinfo{year}{1972}),
  \bibinfo{pages}{252--264}.
\newblock


\bibitem[\protect\citeauthoryear{Rissner}{Rissner}{2016}]%
        {rissner}
\bibfield{author}{\bibinfo{person}{R. Rissner}.}
  \bibinfo{year}{2016}\natexlab{}.
\newblock \showarticletitle{Null ideals of matrices over residue class rings of
  principal ideal domains}.
\newblock \bibinfo{journal}{\emph{Linear Algebra Appl.}}  \bibinfo{volume}{494}
  (\bibinfo{year}{2016}), \bibinfo{pages}{44--69}.
\newblock
\showISSN{0024-3795}
\urldef\tempurl%
\url{https://doi.org/10.1016/j.laa.2016.01.004}
\showDOI{\tempurl}


\bibitem[\protect\citeauthoryear{Sakata}{Sakata}{1988}]%
        {Sakata1988}
\bibfield{author}{\bibinfo{person}{S. Sakata}.}
  \bibinfo{year}{1988}\natexlab{}.
\newblock \showarticletitle{Finding a minimal set of linear recurring relations
  capable of generating a given finite two-dimensional array}.
\newblock \bibinfo{journal}{\emph{J. Symb. Comput.}} \bibinfo{volume}{5},
  \bibinfo{number}{3} (\bibinfo{year}{1988}), \bibinfo{pages}{321--337}.
\newblock
\showISSN{0747-7171}
\urldef\tempurl%
\url{https://doi.org/10.1016/S0747-7171(88)80033-6}
\showDOI{\tempurl}


\bibitem[\protect\citeauthoryear{Sakata}{Sakata}{1990}]%
        {sakata1990extension}
\bibfield{author}{\bibinfo{person}{S. Sakata}.}
  \bibinfo{year}{1990}\natexlab{}.
\newblock \showarticletitle{Extension of the Berlekamp-Massey algorithm to $N$
  dimensions}.
\newblock \bibinfo{journal}{\emph{Information and Computation}}
  \bibinfo{volume}{84}, \bibinfo{number}{2} (\bibinfo{year}{1990}),
  \bibinfo{pages}{207--239}.
\newblock


\bibitem[\protect\citeauthoryear{Sakata}{Sakata}{2009}]%
        {Sakata2009}
\bibfield{author}{\bibinfo{person}{S. Sakata}.}
  \bibinfo{year}{2009}\natexlab{}.
\newblock \showarticletitle{The BMS Algorithm}. In
  \bibinfo{booktitle}{\emph{Gr{\"o}bner Bases, Coding, and Cryptography}}.
  \bibinfo{publisher}{Springer}, \bibinfo{pages}{143--163}.
\newblock
\showISBNx{978-3-540-93806-4}
\urldef\tempurl%
\url{https://doi.org/10.1007/978-3-540-93806-4_9}
\showDOI{\tempurl}


\bibitem[\protect\citeauthoryear{Schwartz}{Schwartz}{1980}]%
        {Schwartz80}
\bibfield{author}{\bibinfo{person}{J.~T. Schwartz}.}
  \bibinfo{year}{1980}\natexlab{}.
\newblock \showarticletitle{Fast Probabilistic Algorithms for Verification of
  Polynomial Identities}.
\newblock \bibinfo{journal}{\emph{J. ACM}} \bibinfo{volume}{27},
  \bibinfo{number}{4} (\bibinfo{year}{1980}), \bibinfo{pages}{701--717}.
\newblock
\urldef\tempurl%
\url{https://doi.org/10.1145/322217.322225}
\showDOI{\tempurl}


\bibitem[\protect\citeauthoryear{Shoup}{Shoup}{2020}]%
        {NTL}
\bibfield{author}{\bibinfo{person}{V. Shoup}.} \bibinfo{year}{2020}\natexlab{}.
\newblock \bibinfo{title}{{NTL}: A Library for doing Number Theory, version
  11.4.3}.
\newblock \bibinfo{howpublished}{\texttt{http://www.shoup.net}}.
\newblock


\bibitem[\protect\citeauthoryear{Van~Barel and Bultheel}{Van~Barel and
  Bultheel}{1992}]%
        {BarBul92}
\bibfield{author}{\bibinfo{person}{M. Van~Barel} {and} \bibinfo{person}{A.
  Bultheel}.} \bibinfo{year}{1992}\natexlab{}.
\newblock \showarticletitle{A general module theoretic framework for vector
  {M-Pad\'e} and matrix rational interpolation}.
\newblock \bibinfo{journal}{\emph{Numer. Algorithms}}  \bibinfo{volume}{3}
  (\bibinfo{year}{1992}), \bibinfo{pages}{451--462}.
\newblock
\urldef\tempurl%
\url{https://doi.org/10.1007/BF02141952}
\showDOI{\tempurl}


\bibitem[\protect\citeauthoryear{{Wiedemann}}{{Wiedemann}}{1986}]%
        {wiedemann}
\bibfield{author}{\bibinfo{person}{D. {Wiedemann}}.}
  \bibinfo{year}{1986}\natexlab{}.
\newblock \showarticletitle{Solving sparse linear equations over finite
  fields}.
\newblock \bibinfo{journal}{\emph{IEEE Trans. Inf. Theory}}
  \bibinfo{volume}{32}, \bibinfo{number}{1} (\bibinfo{year}{1986}),
  \bibinfo{pages}{54--62}.
\newblock
\urldef\tempurl%
\url{https://doi.org/10.1109/TIT.1986.1057137}
\showDOI{\tempurl}


\bibitem[\protect\citeauthoryear{Wolovich}{Wolovich}{1974}]%
        {Wolovich74}
\bibfield{author}{\bibinfo{person}{W.~A. Wolovich}.}
  \bibinfo{year}{1974}\natexlab{}.
\newblock \bibinfo{booktitle}{\emph{Linear Multivariable Systems}}.
  \bibinfo{series}{Applied Mathematical Sciences}, Vol.~\bibinfo{volume}{11}.
\newblock \bibinfo{publisher}{Springer-Verlag New-York}.
\newblock


\bibitem[\protect\citeauthoryear{Zhou and Labahn}{Zhou and Labahn}{2013}]%
        {ZhoLab13}
\bibfield{author}{\bibinfo{person}{W. Zhou} {and} \bibinfo{person}{G. Labahn}.}
  \bibinfo{year}{2013}\natexlab{}.
\newblock \showarticletitle{Computing Column Bases of Polynomial Matrices}. In
  \bibinfo{booktitle}{\emph{ISSAC'13}}. \bibinfo{publisher}{ACM},
  \bibinfo{pages}{379--386}.
\newblock
\showISBNx{978-1-4503-2059-7}
\urldef\tempurl%
\url{https://doi.org/10.1145/2465506.2465947}
\showDOI{\tempurl}


\bibitem[\protect\citeauthoryear{Zippel}{Zippel}{1979}]%
        {Zippel79}
\bibfield{author}{\bibinfo{person}{R. Zippel}.}
  \bibinfo{year}{1979}\natexlab{}.
\newblock \showarticletitle{Probabilistic algorithms for sparse polynomials}.
  In \bibinfo{booktitle}{\emph{EUROSAM'79}} \emph{(\bibinfo{series}{LNCS})},
  Vol.~\bibinfo{volume}{72}. \bibinfo{publisher}{Springer},
  \bibinfo{pages}{216--226}.
\newblock
\end{thebibliography}
\end{document}